\newtheorem{theorem}{Theorem}
\newtheorem{remark}{Remark}
\newtheorem{lemma}{Lemma}
\newtheorem{definition}{Definition}
\newtheorem{corollary}{Corollary}
\begin{document}

\title{On the Capacity of the Binary-Symmetric Parallel-Relay Network}

\author{Lawrence Ong, Sarah J.\ Johnson, and Christopher M.\ Kellett
\thanks{The authors are with the School of Electrical Engineering and
Computer Science, The University of Newcastle, Callaghan, NSW 2308,
Australia (email: lawrence.ong@cantab.net; \{sarah.johnson,chris.kellett\}@newcastle.edu.au).} }

\maketitle

\begin{abstract}
We investigate the binary-symmetric parallel-relay network where there is one source, one destination, and multiple relays in parallel. We show that forwarding relays, where the relays merely transmit their received signals, achieve the capacity in two ways: with coded transmission at the source and a finite number of relays, or uncoded transmission at the source and a sufficiently large number of relays. On the other hand, decoding relays, where the relays decode the source message, re-encode, and forward it to the destination, achieve the capacity when the number of relays is small. In addition, we show that any coding scheme that requires decoding at any relay is suboptimal in large parallel-relay networks, where forwarding relays achieve strictly higher rates. 
\end{abstract}



\maketitle

\section{Introduction}\label{sec:introduction}

We consider a class of parallel-relay networks where a source sends its data to a destination via many relays.
From a theoretical point of view, analyses of parallel-relay networks are of great interest as they are embedded in more general multiterminal networks. The interest in using relays in a network is also driven by practical applications in which direct communication from a source to the destination is difficult. Parallel-relay networks model scenarios in which the destination obtains data from the source through spatial diversity (e.g., see macroscopic diversity~\cite{bernhardt87}). An example is wired networks where there are multiple routes from the source to the destination.

The network model we consider in this paper is the binary-symmetric parallel-relay (BSPR) network  with $K$ relays as depicted in Fig.~\ref{fig:parallel-relay-network}, where 
each source-to-relay and relay-to-destination channel is a binary-symmetric channel. Our aim is to determine which, if any, transmission schemes can achieve capacity on the binary-symmetric parallel-relay network, and under what conditions. We investigate four possible transmission schemes, which are the different combinations of whether the source and the relays send coded or uncoded signals, plus a hybrid scheme. We obtain capacity results when the network is small (in terms of the number of relays) using \emph{decoding} relays, and when the network is medium to large using \emph{forwarding} relays. The terms \emph{small}, \emph{medium}, and \emph{large} will be made precise in the subsequent sections.

\begin{figure}[t]
\centering
\resizebox{7.5cm}{!}{ 
\begin{picture}(0,0)%
\includegraphics{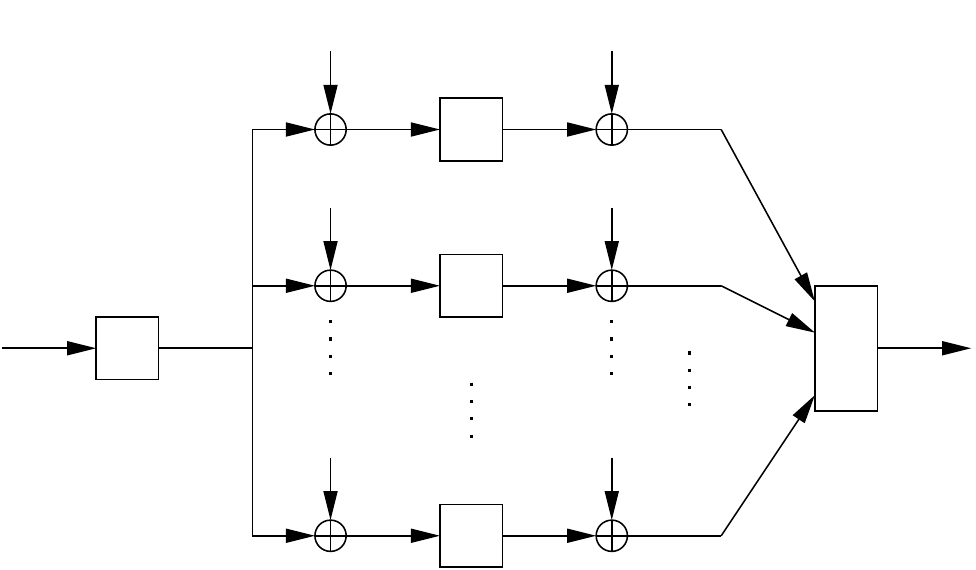}%
\end{picture}%
\setlength{\unitlength}{3947sp}%
\begingroup\makeatletter\ifx\SetFigFont\undefined%
\gdef\SetFigFont#1#2#3#4#5{%
  \fontsize{#1}{#2pt}%
  \fontfamily{#3}\fontseries{#4}\fontshape{#5}%
  \selectfont}%
\fi\endgroup%
\begin{picture}(4674,2721)(-11,-2473)
\put(2176,-1186){\makebox(0,0)[lb]{\smash{{\SetFigFont{12}{14.4}{\familydefault}{\mddefault}{\updefault}{\color[rgb]{0,0,0}$2$}%
}}}}
\put(2176,-436){\makebox(0,0)[lb]{\smash{{\SetFigFont{12}{14.4}{\familydefault}{\mddefault}{\updefault}{\color[rgb]{0,0,0}$1$}%
}}}}
\put(1501,-661){\makebox(0,0)[lb]{\smash{{\SetFigFont{12}{14.4}{\familydefault}{\mddefault}{\updefault}{\color[rgb]{0,0,0}$Z_2$}%
}}}}
\put(1501, 89){\makebox(0,0)[lb]{\smash{{\SetFigFont{12}{14.4}{\familydefault}{\mddefault}{\updefault}{\color[rgb]{0,0,0}$Z_1$}%
}}}}
\put(526,-1486){\makebox(0,0)[lb]{\smash{{\SetFigFont{12}{14.4}{\familydefault}{\mddefault}{\updefault}{\color[rgb]{0,0,0}$S$}%
}}}}
\put(2176,-2386){\makebox(0,0)[lb]{\smash{{\SetFigFont{12}{14.4}{\familydefault}{\mddefault}{\updefault}{\color[rgb]{0,0,0}$K$}%
}}}}
\put(2776, 89){\makebox(0,0)[lb]{\smash{{\SetFigFont{12}{14.4}{\familydefault}{\mddefault}{\updefault}{\color[rgb]{0,0,0}$E_1$}%
}}}}
\put(2851,-661){\makebox(0,0)[lb]{\smash{{\SetFigFont{12}{14.4}{\familydefault}{\mddefault}{\updefault}{\color[rgb]{0,0,0}$E_2$}%
}}}}
\put(3976,-1486){\makebox(0,0)[lb]{\smash{{\SetFigFont{12}{14.4}{\familydefault}{\mddefault}{\updefault}{\color[rgb]{0,0,0}$D$}%
}}}}
\put(1501,-1861){\makebox(0,0)[lb]{\smash{{\SetFigFont{12}{14.4}{\familydefault}{\mddefault}{\updefault}{\color[rgb]{0,0,0}$Z_K$}%
}}}}
\put(2851,-1861){\makebox(0,0)[lb]{\smash{{\SetFigFont{12}{14.4}{\familydefault}{\mddefault}{\updefault}{\color[rgb]{0,0,0}$E_K$}%
}}}}
\put(1726,-286){\makebox(0,0)[lb]{\smash{{\SetFigFont{12}{14.4}{\familydefault}{\mddefault}{\updefault}{\color[rgb]{0,0,0}$V_1$}%
}}}}
\put(1726,-1036){\makebox(0,0)[lb]{\smash{{\SetFigFont{12}{14.4}{\familydefault}{\mddefault}{\updefault}{\color[rgb]{0,0,0}$V_2$}%
}}}}
\put(1726,-2236){\makebox(0,0)[lb]{\smash{{\SetFigFont{12}{14.4}{\familydefault}{\mddefault}{\updefault}{\color[rgb]{0,0,0}$V_K$}%
}}}}
\put(2476,-286){\makebox(0,0)[lb]{\smash{{\SetFigFont{12}{14.4}{\familydefault}{\mddefault}{\updefault}{\color[rgb]{0,0,0}$X_1$}%
}}}}
\put(2476,-1036){\makebox(0,0)[lb]{\smash{{\SetFigFont{12}{14.4}{\familydefault}{\mddefault}{\updefault}{\color[rgb]{0,0,0}$X_2$}%
}}}}
\put(2476,-2236){\makebox(0,0)[lb]{\smash{{\SetFigFont{12}{14.4}{\familydefault}{\mddefault}{\updefault}{\color[rgb]{0,0,0}$X_K$}%
}}}}
\put( 76,-1336){\makebox(0,0)[lb]{\smash{{\SetFigFont{12}{14.4}{\familydefault}{\mddefault}{\updefault}{\color[rgb]{0,0,0}$W$}%
}}}}
\put(4276,-1336){\makebox(0,0)[lb]{\smash{{\SetFigFont{12}{14.4}{\familydefault}{\mddefault}{\updefault}{\color[rgb]{0,0,0}$\hat{W}$}%
}}}}
\put(3151,-286){\makebox(0,0)[lb]{\smash{{\SetFigFont{12}{14.4}{\familydefault}{\mddefault}{\updefault}{\color[rgb]{0,0,0}$Y_1$}%
}}}}
\put(3151,-1036){\makebox(0,0)[lb]{\smash{{\SetFigFont{12}{14.4}{\familydefault}{\mddefault}{\updefault}{\color[rgb]{0,0,0}$Y_2$}%
}}}}
\put(3151,-2236){\makebox(0,0)[lb]{\smash{{\SetFigFont{12}{14.4}{\familydefault}{\mddefault}{\updefault}{\color[rgb]{0,0,0}$Y_K$}%
}}}}
\put(901,-1336){\makebox(0,0)[lb]{\smash{{\SetFigFont{12}{14.4}{\familydefault}{\mddefault}{\updefault}{\color[rgb]{0,0,0}$U$}%
}}}}
\end{picture}%
}
\caption{The binary-symmetric parallel-relay network with $K$ relays}
\label{fig:parallel-relay-network}
\end{figure}

Networks with relays were first introduced by van der Meulen~\cite{meulen71} in which three nodes exchange data and any node can facilitate the message transfer between the other two. The special case where a source transmits data to a destination with the help of a relay, which itself has no private data to send, was considered by Cover and El Gamal~\cite{covergamal79}, and it is now commonly known as \emph{the relay channel}. Since then, many variants of relay channels have been investigated, for example: (a) the multiple-access relay channel~\cite{sankarkramer07,hausl09}, (b) the broadcast relay channel~\cite{liangkramer07}, (c) the multiple-relay channel~\cite{xiekumar03,kramergastpar04}, (d) the multi-way relay channel~\cite{avestimehrsezgin10,ongmjohnsonit11}, (e) the multiple-input multiple-output (MIMO) relay network~\cite{bolcskeinabar06}, (f) the multiple-source multiple-destination multiple-relay wireless network~\cite{xiekumar04,liangkumar07,johnsonong11ett}, (g) the additive white Gaussian noise (AWGN) parallel-relay network~\cite{gastparvetterli05,sanderovichshamai08,scheingallager00,boujemaa10}, (h) the discrete-memoryless parallel-relay network with lossless channels from the relays to the destination~\cite{sanderovichshamai08}, and (i) the two-user BSPR network with  lossless channels from the relays to the destination~\cite{schein01thesis}.
To the best of our knowledge, the BSPR network with noise on \emph{both} the source-to-relay and relay-to-destination channels has not previously been investigated. More specifically, this work differs from (h) and (i) in that the relay-to-destination channels are noisy and that the asymptotic behavior of the network when the number of relays grows is investigated.

To date, the capacity of even the simple single-relay channel is not known except for a few special cases, e.g., the degraded case~\cite{covergamal79,xiekumar03} (note that the parallel-relay network is not a degraded channel).
This hints at the difficulty of analyzing multiterminal networks, especially large networks.

Asymptotic capacity results, however, have been obtained for some AWGN networks.
It has been shown that using coding at the source and no coding at the relays achieves the asymptotic capacity of the AWGN parallel-relay network~\cite{gastparvetterli05} and the non-coherent MIMO relay network~\cite{bolcskeinabar06} as the number of relays increases to infinity. Here the source encodes the data and the relays merely scale their received signals and forward them. These results hold only for large AWGN networks (with the number of relays tending to infinity) and have not been proven for other types of networks. In this paper, we will show a similar result for the BSPR network\footnote{The key difference (besides the channel alphabets) between the AWGN parallel-relay network and the BSPR network is that, for the latter the relays' signals are received orthogonally at the destination, see Fig.~\ref{fig:parallel-relay-network}.}, i.e., no coding (only forwarding) at the relays achieves the capacity asymptotically as the number of relays tends to infinity. In addition, we will also show that using coded transmission at the source and no coding at the relays achieves transmission rates close to the capacity of the BSPR network with a generally medium number of relays, a result which has no current equivalent in the AWGN network. For example, for a network with \emph{cross-over} probability of 0.2, we show that using 37 or more forwarding relays allows us to achieve within 0.0001 bits of the capacity.


While it has also been shown~\cite{gastpar08} that uncoded transmission is optimal (in terms of the distortion measure) in the AWGN parallel-relay network where the source messages are Gaussian random variables, we show in this paper that in the binary-symmetric parallel-relay channel where the source messages are binary, uncoded transmission is strictly suboptimal when the number of relays is small.




\subsection{Main Results}

In this paper, we show that by applying coding at the source and using a finite number of forwarding relays, i.e., relays which forward their received signals without any decoding and re-encoding, performances arbitrarily close to the capacity of the BSPR network can be achieved. Alternatively, asymptotic (as the number of relays increases) capacity results hold even when both the source and the relays
send uncoded message bits. For small networks on the other hand, forwarding relays do not achieve the capacity, but decoding relays, where the relays decode the source message, re-encode, and forward it to the destination, do achieve the capacity.
Specifically, we have the following results, where the terms small, medium, and large used above have been made precise (recall that $K$ is the number of relays in the BSPR network).
\begin{itemize}
\item Theorems~\ref{theorem:cut-set-upper-bound} and \ref{theorem:upper-bounds-capacity} give two upper bounds to the capacity of the BSPR network.
\item Theorems~\ref{thm:achievability-simple}, \ref{theorem:coded-relay}, and \ref{theorem:hybrid} give three lower bounds (achievable rates) to the capacity of the BSPR network.
\item Corollary~\ref{corollary:df} gives the capacity of small BSPR networks, i.e., when $1 \leq K \leq K'(p_s,p_d)$ for some $K'(p_s,p_d)$, defined in \eqref{eq:kdash}, which is a function of the cross-over probabilities.
\item Theorems~\ref{thm:uncoded} and \ref{theorem:coded-forward-asymptotic} give asymptotic capacity results of the BSPR network as $K$ increases.
\item Corollary~\ref{corollary:forwarding} gives achievable rates arbitrarily close to the capacity for medium to large networks, i.e., achievable rates within $\zeta$ bits of the capacity for an arbitrarily small $\zeta>0$, when $K > K(p,\zeta)$ for some $K(p,\zeta)$, defined in \eqref{eq:q}, which is a function of the cross-over probabilities and $\zeta$.
\end{itemize}


\section{Channel Model and Notation}\label{sec:channel-model}

We consider the BSPR network depicted in Fig.~\ref{fig:parallel-relay-network}. The network consists of a source (denoted by node $S$), $K$ relays (denoted by nodes $1,2\dotsc,K$), a destination (denoted by node $D$), and a network of channels defined as follows:

The channel from the source to the $i$th relay is a binary symmetric channel given by
\begin{equation}
V_i = U \oplus Z_i,
\end{equation}
where $U \in \{0,1\}$ is the signal transmitted by the source, $V_i \in \{0,1\}$ the signal received by the $i$th relay, $Z_i \in \{0,1\}$ is channel noise with $\Pr\{Z_i = 1\} = p_{s,i}$, and $\oplus$ is modulo-two addition. The probability $p_{s,i}$ is also known as the cross-over probability for the binary symmetric channel from the source to relay $i$. 
Without loss of generality, we assume that $0 \leq p_{s,i} \leq \frac{1}{2}$. 

The channel from relay $i$ to the destination is a binary symmetric channel given by
\begin{equation}
Y_i = X_i \oplus E_i,
\end{equation}
where $X_i \in \{0,1\}$ is the signal transmitted by relay $i$, $Y_i \in \{0,1\}$ is the signal received by the destination, and $E_i \in \{0,1\}$ is channel noise with $\Pr\{E_i = 1\} = p_{i,d}$. Again, we assume that $0 \leq p_{i,d} \leq \frac{1}{2}$.

We assume that all the channels are independent, time invariant, and memoryless. This means
all $Z_i$ and $E_i$ are independent of each other and are also independent and identically distributed over network uses (one network use is defined as simultaneous uses of all the channels in the network). 

For any channel variable $A \in \{U, Z_i, V_i, X_i, E_i, Y_i: 1 \leq i \leq K\}$, we denote by $A[t]$ the variable $A$ at time $t$, i.e., $A$ on the $t$th network use.
We use a bold symbol to denote a vector (over time) of length $n$, i.e., $\boldsymbol{A} \triangleq ( A[1], A[2], \dotsc, A[n] )$, where $n$ is the code length. We will also use bar to denote a vector (over the relays) of length $K$, i.e., $\bar{A} \triangleq (A_1, A_2, \dotsc, A_K)$, where $K$ is the total number of relays. For a random variable $A$ in upper case, the corresponding lower case $a$ is used to denote the realization.

A message $W$ is observed by the source and is to be communicated to the destination in $n$ network uses.
An $(M,n)$ code is defined by the following encoding and decoding functions:
(i) a message, $W \in  \{0,1,\dotsc,M-1\} \triangleq \mathcal{W}$;
(ii) an encoding function at the source, $\boldsymbol{U} = f_S(W)$;
(iii) a set of $n$ encoding functions at each relay, $X_i[t] = f_{i,t}(V_i[1], V_i[2], \dotsc, V_i[t-1])$, for all $i \in \{1,2,\dotsc,K\}$ and $t \in \{1,2,\dotsc, n\}$;
(iv) a decoding function at the destination, $\hat{W} = g(\boldsymbol{Y}_1, \boldsymbol{Y}_2, \dotsc, \boldsymbol{Y}_K)$,
where $\hat{W}$ is the estimate of $W$.
Note that the relays' transmit signals can only depend on their respective past received signals. The {\em rate} of this code is $(\log_2 M)/n$ bits per network use.

Let $W$ be randomly and uniformly chosen from the source alphabet $\{0,1,\dotsc,M-1\}$. The average error probability is given by
$P_\textnormal{e} = \frac{1}{M} \sum_{w=0}^{M-1} \Pr\{\hat{W} \neq w | W=w\}$.
The rate $R$ is \emph{achievable} if the following is true: for any $\epsilon > 0$, there exists for sufficiently large $n$ a $(2^{nR},n)$ code such that $P_\textnormal{e} \leq \epsilon$. The \emph{capacity}, $C$, is defined as the supremum of all achievable rates.
We say that any node can decode the source message \emph{reliably} if and only if the probability of erroneous decoding can be made arbitrarily small.

A special case of the BSPR network is the symmetrical BSPR defined as follows:
\begin{definition} \label{def:symmetrical}
A BSPR network is said to be symmetrical if $p_{s,i} = p_s$ and $p_{i,d} = p_d$, for all $i \in \{1,2,\dotsc,K\}$.
\end{definition}
For the symmetrical BSPR, the channel from the source to each relay is equally noisy, and each channel from the relays to the destination is equally noisy. For the symmetrical case, we consider $p_s < 1/2$ and $p_d < 1/2$, as setting any of $p_s$ or $p_d$ to $1/2$, the capacity of the network is zero as the channels are randomized such that no information can be transferred from the source to the destination.

\section{Upper Bounds to Capacity}\label{sec:upper-bound}

In this section, we derive two upper bounds to $C$.
Define $H(p) \triangleq -p \log p - (1-p) \log (1-p)$, for some $0 < p \leq \frac{1}{2}$, and $H(0) \triangleq 0$.
We first have the following:
\begin{theorem}\label{theorem:cut-set-upper-bound}
The capacity of the BSPR network is upper bounded by $C \leq R_\textnormal{ub}$, where
\begin{equation}\label{eq:cut-set-bound}
R_\textnormal{ub} \triangleq  \min \left\{ \max_{p(u)} I(U;\bar{V}), K - \sum_{i=1}^K H(p_{i,d})   \right\}.
\end{equation}
\end{theorem}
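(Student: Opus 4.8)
The plan is to recognize $R_\textnormal{ub}$ as the cut-set (max-flow min-cut) bound evaluated at the two cuts that separate the source from the destination in the obvious ways, and to carry out a direct Fano-plus-data-processing argument for each. Fix an achievable rate $R$ and, for large $n$, a $(2^{nR},n)$ code with small $P_\textnormal{e}$. Fano's inequality gives $H(W\mid\hat W)\le 1+P_\textnormal{e}\log|\mathcal{W}|$, hence $nR=H(W)\le I(W;\hat W)+n\epsilon_n$ with $\epsilon_n\to0$ as $P_\textnormal{e}\to0$. It therefore suffices to bound $I(W;\hat W)$ by $n$ times each of the two terms inside the $\min$ and then let $n\to\infty$ (taking $P_\textnormal{e}\to0$).

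\emph{The broadcast cut} isolates the source. Because $\boldsymbol U=f_S(W)$ is a deterministic function of $W$, because $\bar{\boldsymbol V}=\boldsymbol U\oplus\bar{\boldsymbol Z}$ with $\bar{\boldsymbol Z}$ independent of $W$, and because each relay input is a causal function of $\bar{\boldsymbol V}$ so that $\bar{\boldsymbol Y}$ (hence $\hat W$) is a function of $\bar{\boldsymbol V}$ and of the independent noise $\bar{\boldsymbol E}$ alone, the chain $W\to\boldsymbol U\to\bar{\boldsymbol V}\to\hat W$ is Markov. Data processing gives $I(W;\hat W)\le I(\boldsymbol U;\bar{\boldsymbol V})$, and memorylessness of the source-to-relay channels single-letterizes this: $H(\bar{\boldsymbol V}\mid\boldsymbol U)=\sum_t H(\bar V[t]\mid U[t])$ while $H(\bar{\boldsymbol V})\le\sum_t H(\bar V[t])$, so $I(\boldsymbol U;\bar{\boldsymbol V})\le\sum_t I(U[t];\bar V[t])\le n\max_{p(u)}I(U;\bar V)$, the maximization staying in place because the source input distribution is unconstrained.

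\emph{The multiple-access cut} puts the source and all $K$ relays on one side, the destination alone on the other. Here $W\to\bar{\boldsymbol X}\to\bar{\boldsymbol Y}\to\hat W$ is Markov, since given $\bar{\boldsymbol X}$ the output $\bar{\boldsymbol Y}=\bar{\boldsymbol X}\oplus\bar{\boldsymbol E}$ depends only on $\bar{\boldsymbol E}$, so $I(W;\hat W)\le I(\bar{\boldsymbol X};\bar{\boldsymbol Y})=H(\bar{\boldsymbol Y})-H(\bar{\boldsymbol Y}\mid\bar{\boldsymbol X})$. Orthogonality and independence of the relay-to-destination channels give $H(\bar{\boldsymbol Y}\mid\bar{\boldsymbol X})=\sum_{i=1}^K H(\boldsymbol Y_i\mid\boldsymbol X_i)=n\sum_{i=1}^K H(p_{i,d})$, and each $Y_i[t]$ being binary gives $H(\bar{\boldsymbol Y})\le nK$; hence $I(W;\hat W)\le n\big(K-\sum_{i=1}^K H(p_{i,d})\big)$. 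Combining the two bounds yields $C\le R_\textnormal{ub}$.

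The only delicate point is the bookkeeping forced by the relays' causal, time-varying encoders $X_i[t]=f_{i,t}(V_i[1],\dotsc,V_i[t-1])$: one must verify that $\bar{\boldsymbol Z}$ and $\bar{\boldsymbol E}$ really are independent of $W$ and of each other so that the two Markov chains above are valid, and that feeding past received symbols back into the relay inputs does not obstruct the single-letterization on the broadcast cut — it does not, precisely because the channel noises $\bar Z[t],\bar E[t]$ at time $t$ are independent of everything generated before time $t$. Everything else is routine: the second term is just $\sum_i$ of the BSC capacity $1-H(p_{i,d})$, and the first is the genie-aided broadcast-cut mutual information.
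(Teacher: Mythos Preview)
Your proof is correct and follows essentially the same approach as the paper: Fano's inequality followed by the data-processing inequality applied to the Markov chain $W\to\boldsymbol U\to\bar{\boldsymbol V}\to\bar{\boldsymbol X}\to\bar{\boldsymbol Y}\to\hat W$, then single-letterizing via memorylessness for each of the two cuts. The only cosmetic differences are that the paper states one long Markov chain and reads off both bounds from it, whereas you state the two shorter chains separately, and that for the multiple-access cut the paper first writes $I(\bar X;\bar Y)=\sum_i I(X_i;Y_i)$ and then bounds each term by $1-H(p_{i,d})$, while you bound $H(\bar{\boldsymbol Y})\le nK$ and compute $H(\bar{\boldsymbol Y}\mid\bar{\boldsymbol X})$ directly---these amount to the same thing.
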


\begin{proof}
See Appendix~\ref{appendix:ub}
\end{proof}

We have the following observation:
\begin{lemma}\label{lemma:symmetric}
The channel from $U$ to $\bar{V}$ is \emph{symmetrical} in the sense of ~\cite[Thm.\ 4.5.2]{gallager68}.
\end{lemma}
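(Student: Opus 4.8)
The plan is to exhibit a partition of the output alphabet $\{0,1\}^K$ that meets Gallager's definition of a symmetric channel in \cite[Thm.~4.5.2]{gallager68}. The input alphabet is just $\{0,1\}$, so the transition matrix $[P(\bar v \mid u)]$ has only two rows, and since, conditioned on $U$, the $V_i = U \oplus Z_i$ are independent,
\begin{equation}
P(\bar v \mid u) = \prod_{i=1}^K p_{s,i}^{\,v_i \oplus u}\,(1-p_{s,i})^{\,1 \oplus v_i \oplus u}.
\end{equation}
The structural fact I would rely on is the complementation symmetry $P(\bar v \mid 0) = P(\bar v \oplus \bar 1 \mid 1)$, where $\bar 1 \triangleq (1,1,\dotsc,1)$: flipping the input merely relabels each output coordinate, which one checks directly from the displayed product.

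First I would partition $\{0,1\}^K$ into the $2^{K-1}$ unordered pairs $\{\bar v,\ \bar v \oplus \bar 1\}$; these are genuine two-element pairs since $\bar 1 \neq \bar 0$ for $K \geq 1$, and they are pairwise disjoint and cover the whole output set. Second, for each such pair I would write the associated $2 \times 2$ submatrix of transition probabilities, with inputs as rows and the two outputs as columns; setting $a \triangleq P(\bar v \mid 0)$ and $b \triangleq P(\bar v \oplus \bar 1 \mid 0)$, the complementation symmetry gives this submatrix first row $(a,b)$ and second row $(b,a)$. Third, I would observe that in this submatrix each row is a permutation of the other and each column is a permutation of the other, which is exactly the condition Gallager imposes on each block of the partition; applying this to every pair establishes that the channel $U \to \bar V$ is symmetric.

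I do not expect a real obstacle here: the only care needed is to match the statement precisely to Gallager's formulation — choosing the correct partition (into complementary pairs rather than singletons or the single full block, which fail unless all $p_{s,i} = \tfrac12$) and verifying both the row- and column-permutation conditions on each block. As a by-product, the symmetry yields that the uniform input $p(u) = \tfrac12$ attains $\max_{p(u)} I(U;\bar V)$ and that $H(\bar V \mid U) = \sum_{i=1}^K H(p_{s,i})$ for every $p(u)$, which is presumably how the lemma is meant to be invoked when evaluating the bound in \eqref{eq:cut-set-bound}.
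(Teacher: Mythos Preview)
Your proposal is correct and is essentially the same argument as the paper's: the paper also partitions the output alphabet into complementary pairs $\{\bar v,\ \bar{\bar v}\}$ (where $\bar{\bar v}$ is $\bar v$ with all bits flipped), observes the identity $p^*(\bar v\mid u')=p^*(\bar{\bar v}\mid u'')$ for $u'\neq u''$, and concludes that in each $2\times 2$ sub-matrix the rows and columns are permutations of one another. Your write-up is a bit more explicit about verifying Gallager's conditions, but the idea is identical.
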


\begin{proof}
See Appendix~\ref{appendix:symmetric}.
\end{proof}

\begin{remark}
Note the difference between a symmetrical \underline{channel} in Lemma~\ref{lemma:symmetric} and a symmetrical \underline{network} as defined in Definition~\ref{def:symmetrical}. For the rest of this paper, unless otherwise stated, the term symmetrical is used to refer to symmetrical BSPR networks.
\end{remark}

From Lemma~\ref{lemma:symmetric}, it follows that the maximizing input distribution for $I(U;\bar{V})$ is the uniform binary distribution, denoted by $p'(u)$~\cite[Thm.\ 4.5.2]{gallager68}.

The aforementioned bound, $R_\textnormal{ub}$, has the same form as the cut-set bound~\cite[Thm.\ 15.10.1]{coverthomas06} except that the input distribution is the product distribution rather than the joint distribution.

Noting that $I(U;\bar{V}) \leq H(U) \leq 1$, we have the following (looser) upper bound, which is simpler and is independent of $K$:
\begin{theorem}\label{theorem:upper-bounds-capacity}
The capacity of the BSPR network is upper bounded by $C \leq 1$.
\end{theorem}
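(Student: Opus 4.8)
The plan is to obtain the bound $C \le 1$ as an immediate consequence of Theorem~\ref{theorem:cut-set-upper-bound}, since all we need is to show that the first term inside the minimum in \eqref{eq:cut-set-bound}, namely $\max_{p(u)} I(U;\bar V)$, never exceeds $1$. This is enough because the capacity is upper bounded by $R_\textnormal{ub}$, which is a minimum of two quantities, and hence is no larger than the first of them.

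First I would recall that $U$ is a single binary random variable, so $H(U) \le \log_2 2 = 1$ bit, with equality exactly when $U$ is uniform. Next I would invoke the elementary information-theoretic inequality $I(U;\bar V) \le H(U)$, which holds because $I(U;\bar V) = H(U) - H(U \mid \bar V)$ and conditional entropy is nonnegative. Chaining these gives $I(U;\bar V) \le H(U) \le 1$ for every input distribution $p(u)$, and therefore $\max_{p(u)} I(U;\bar V) \le 1$. (The excerpt in fact already notes ``$I(U;\bar V) \le H(U) \le 1$'' in the sentence preceding the theorem, so the argument is really just making that remark into a one-line proof via the cut-set bound of Theorem~\ref{theorem:cut-set-upper-bound}.)

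Finally I would combine: $C \le R_\textnormal{ub} = \min\{\max_{p(u)} I(U;\bar V),\ K - \sum_{i=1}^K H(p_{i,d})\} \le \max_{p(u)} I(U;\bar V) \le 1$, which is the claim. There is essentially no obstacle here — the statement is a deliberately weakened, $K$-independent corollary of the cut-set-type bound — so the ``hard part'' amounts only to making sure the two cited facts ($H$ of a binary variable is at most one bit, and mutual information is at most the entropy of either argument) are stated cleanly; both are standard and require no computation.
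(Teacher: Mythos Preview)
Your proposal is correct and is essentially identical to the paper's own argument: the paper simply observes $I(U;\bar V)\le H(U)\le 1$ immediately before stating the theorem and treats the bound as a direct weakening of Theorem~\ref{theorem:cut-set-upper-bound}. There is nothing to add.
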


\section{An Equivalent Model for Forwarding Relays}

Before we present achievable rates of the different coding schemes, we first derive an equivalent model for the BSPR network when all relays do simple forwarding. 
\begin{definition}
Relay $i$, $i \in \{1,2,\dotsc,K\}$, is called a \emph{forwarding relay} if it simply forwards its received signals, i.e.,
\begin{equation}
X_i[t] = V_i[t-1],
\end{equation}
for all $t$, and $X_i[1] = 0$ by convention.
\end{definition}

Using only forwarding relays, the received signals at the destination can be re-written as
\begin{equation}
Y_i = U \oplus Z_i \oplus E_i = U \oplus N_i,\label{eq:equivalent-channel-2}
\end{equation}
for all $ i \in \{1,2,\dotsc,K\}$,
where $N_i \triangleq Z_i \oplus E_i$.
We have dropped the time indices in the aforementioned equation as it is clear that the destination receives the noisy version of the source's transmission two units of time later. It follows that
\begin{equation}\label{eq:equivalent-cross-over}
\Pr\{N_i = 1\} = p_{s,i}(1-p_{i,d}) + (1-p_{s,i})p_{i,d} \triangleq p_i,
\end{equation}
which is the cross-over probability of the \emph{effective} binary symmetric channel from $U$ to $Y_i$ when relay $i$ is a forwarding relay.


\begin{figure}[t]
\centering
\resizebox{6cm}{!}{ 
\begin{picture}(0,0)%
\includegraphics{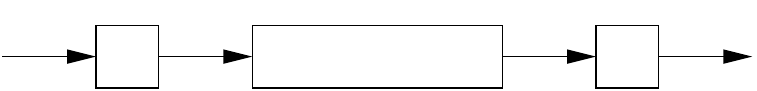}%
\end{picture}%
\setlength{\unitlength}{3947sp}%
\begingroup\makeatletter\ifx\SetFigFont\undefined%
\gdef\SetFigFont#1#2#3#4#5{%
  \reset@font\fontsize{#1}{#2pt}%
  \fontfamily{#3}\fontseries{#4}\fontshape{#5}%
  \selectfont}%
\fi\endgroup%
\begin{picture}(3624,420)(-11,-1573)
\put(1276,-1486){\makebox(0,0)[lb]{\smash{{\SetFigFont{12}{14.4}{\familydefault}{\mddefault}{\updefault}{\color[rgb]{0,0,0}$Y_i = U \oplus N_i$}%
}}}}
\put(2926,-1486){\makebox(0,0)[lb]{\smash{{\SetFigFont{12}{14.4}{\familydefault}{\mddefault}{\updefault}{\color[rgb]{0,0,0}$D$}%
}}}}
\put(3226,-1336){\makebox(0,0)[lb]{\smash{{\SetFigFont{12}{14.4}{\familydefault}{\mddefault}{\updefault}{\color[rgb]{0,0,0}$\hat{W}$}%
}}}}
\put(2551,-1336){\makebox(0,0)[lb]{\smash{{\SetFigFont{12}{14.4}{\familydefault}{\mddefault}{\updefault}{\color[rgb]{0,0,0}$\bar{Y}$}%
}}}}
\put(526,-1486){\makebox(0,0)[lb]{\smash{{\SetFigFont{12}{14.4}{\familydefault}{\mddefault}{\updefault}{\color[rgb]{0,0,0}$S$}%
}}}}
\put( 76,-1336){\makebox(0,0)[lb]{\smash{{\SetFigFont{12}{14.4}{\familydefault}{\mddefault}{\updefault}{\color[rgb]{0,0,0}$W$}%
}}}}
\put(901,-1336){\makebox(0,0)[lb]{\smash{{\SetFigFont{12}{14.4}{\familydefault}{\mddefault}{\updefault}{\color[rgb]{0,0,0}$U$}%
}}}}
\end{picture}%
}
\caption{The equivalent point-to-point channel with forwarding relays, where $\bar{Y}=(Y_1,Y_2,\dotsc,Y_K)$}
\label{fig:equivalent-channel-simple-relays}
\end{figure}

Thus, we have turned the network in Fig.~\ref{fig:parallel-relay-network} with only forwarding relays into a point-to-point channel from $U$ to $\bar{Y}$ as depicted in Fig.~\ref{fig:equivalent-channel-simple-relays}. Recall that $\bar{Y} \triangleq(Y_1,Y_2,\dotsc,Y_K)$. 

\section{Lower Bounds to Capacity}\label{sec:lower-bound}

In this section, we present four coding schemes and derive achievable rates for each scheme.

\subsection{Coded Transmission with Forwarding Relays}\label{sec:coded-transm-with-simple}

We first consider coded transmission with forwarding relays, where (i) the source performs channel coding, and (ii) all the relays do simple forwarding. Therefore, we have an equivalent point-to-point channel from $U$ to $\bar{Y}$ as described in the previous section. This coding scheme is also known as amplify-and-forward~\cite{kramergastpar04}.
Using this scheme, the following rates are achievable:
\begin{theorem}\label{thm:achievability-simple}
Consider the BSPR network. Coded transmission with forwarding relays achieves any rate $R < R_\textnormal{coded,f}$, where
\begin{equation}\label{eq:achievable-rate-coded-forward}
R_\textnormal{coded,f} \triangleq  I(U;\bar{Y})
\end{equation}
evaluated with the uniform binary distribution $p'(u)$. 
\end{theorem}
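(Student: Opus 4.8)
The plan is to recognize that, once all relays forward, the network collapses exactly to the point-to-point channel $U \to \bar Y$ depicted in Fig.~\ref{fig:equivalent-channel-simple-relays}, with each component channel $Y_i = U \oplus N_i$ a binary symmetric channel of cross-over probability $p_i$ given in \eqref{eq:equivalent-cross-over}, and the $N_i$ mutually independent. Because this is a discrete memoryless channel (the relays' forwarding rule makes the induced channel from $U[t]$ to $\bar Y[t]$ time-invariant and memoryless, up to the fixed two-slot delay which does not affect rate), Shannon's noisy-channel coding theorem applies directly: any rate below $\max_{p(u)} I(U;\bar Y)$ is achievable by a standard random-coding / joint-typicality argument at the source, with the relays simply forwarding.

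The only remaining point is to justify evaluating $I(U;\bar Y)$ at the uniform input $p'(u)$ rather than at the maximizing distribution. First I would note that the channel $U \to \bar Y$ is the concatenation of the channel $U \to \bar V$ (studied in Lemma~\ref{lemma:symmetric}) with the independent per-component noise $\bar E$; since $U \to \bar V$ is symmetrical in the sense of \cite[Thm.~4.5.2]{gallager68} and appending independent noise on each coordinate preserves that symmetry structure (the output alphabet again partitions into subsets on which the transition matrix is a permutation of columns and each row's entries are a permutation of a common list), the channel $U \to \bar Y$ is likewise symmetrical. By \cite[Thm.~4.5.2]{gallager68} the uniform binary input then maximizes $I(U;\bar Y)$, so $\max_{p(u)} I(U;\bar Y) = I(U;\bar Y)|_{p'(u)} = R_\textnormal{coded,f}$, and the achievability claim follows. (If one prefers not to invoke symmetry, it suffices to observe that $R_\textnormal{coded,f}$ is by definition a particular value of $I(U;\bar Y)$, hence no larger than $\max_{p(u)} I(U;\bar Y)$, and achievability at that particular rate still follows from the coding theorem applied with input distribution $p'(u)$ held fixed across the block.)

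Writing this out cleanly: (i) invoke the equivalent-channel reduction of the previous section to replace the network by the DMC $U \to \bar Y$; (ii) apply the point-to-point channel coding theorem to get achievability of every $R < \max_{p(u)} I(U;\bar Y)$; (iii) verify the relay operations are admissible under the code definition of Section~\ref{sec:channel-model}, i.e., $X_i[t] = f_{i,t}(V_i[1],\dots,V_i[t-1]) = V_i[t-1]$ is a valid causal relay encoder; (iv) identify $\max_{p(u)} I(U;\bar Y)$ with $R_\textnormal{coded,f}$ via the symmetry argument above. I expect step (iv) — the symmetry-preservation argument — to be the only real content; everything else is a routine appeal to the equivalent model and to Shannon's theorem. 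The potential subtlety there is making precise that concatenating a symmetrical channel with independent coordinatewise binary-symmetric noise yields a symmetrical channel, which I would handle by exhibiting the required partition of the enlarged output alphabet explicitly, or simply by the weaker observation in the parenthetical remark that sidesteps optimality of $p'(u)$ altogether.
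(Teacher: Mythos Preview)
Your proposal is correct and follows essentially the same route as the paper: reduce to the equivalent point-to-point DMC $U \to \bar Y$, apply the channel coding theorem, and invoke symmetry (as in Lemma~\ref{lemma:symmetric}) to identify the uniform input as optimal. The only detail the paper makes more explicit is the one-slot relay delay, which costs a factor $\frac{n}{n+1}$ in rate and is absorbed by taking $n$ large; you acknowledge this informally (``does not affect rate''), which is fine, though the paper spells out the $\frac{n}{n+1}$ argument.
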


\begin{proof}
Using the results of the point-to-point channel~\cite[Thm.\ 7.7.1]{coverthomas06}, the aforementioned scheme achieves rates up to  $R < \frac{n}{n+1} I(U;\bar{Y})$, as we use $(n+1)$ network uses in total to transmit an $nR$-bit message. However, for any $R$ that satisfies $R < R_\textnormal{coded,f}$, we can always find a sufficiently large $n$ such that $R < \frac{n}{n+1}R_\textnormal{coded,f} = \frac{n}{n+1} I(U;\bar{Y})$. 
Also, using a similar argument as the proof of Lemma~\ref{lemma:symmetric}, we can show that the channel $U \rightarrow \bar{Y}$ is symmetrical. So, the maximizing input distribution for $I(U;\bar{Y})$ is the uniform distribution~\cite[Thm.\ 4.5.2]{gallager68}. %
\end{proof}

\vspace{1ex}\noindent {\bf Symmetrical case: $p_{s,i} = p_s$ and $p_{i,d} = p_d$, $\forall i$}
\newline\indent Now, we evaluate \eqref{eq:achievable-rate-coded-forward} for the symmetrical BSPR network:
\begin{corollary}\label{theorem:simple-evaluation}
Consider the symmetrical BSPR network. Coded transmission with forwarding relays achieves any rate $R < R_\textnormal{coded,f}$, where 
\begin{multline}\label{eq:simple-evaluation}
  R_\textnormal{coded,f}  = 1 + Kp \log p + Kq\log q  - \sum_{l=0}^{K-1} \Bigg[\binom{K-1}{l}  \\ (q^lp^{K-l} + q^{K-l}p^l) \log (q^lp^{K-l} + q^{K-l}p^l) \Bigg],
\end{multline} 
where $p\triangleq p_{s}(1-p_{d}) + (1-p_{s})p_{d}$ and $q \triangleq 1-p$.
\end{corollary}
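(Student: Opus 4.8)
The plan is to evaluate the mutual information $I(U;\bar{Y})$ from \eqref{eq:achievable-rate-coded-forward} directly, using the equivalent point-to-point model $Y_i = U \oplus N_i$ of \eqref{eq:equivalent-channel-2} specialized to the symmetrical case, so that $N_1,\dotsc,N_K$ are mutually independent $\mathrm{Bernoulli}(p)$ with $p$ as in \eqref{eq:equivalent-cross-over} when $p_{s,i}=p_s$ and $p_{i,d}=p_d$. Since Theorem~\ref{thm:achievability-simple} instructs us to evaluate at the uniform input $p'(u)$, I would write $I(U;\bar{Y}) = H(\bar{Y}) - H(\bar{Y}\mid U)$ and compute the two entropies separately.

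The conditional term is immediate: given $U=u$, each $Y_i = u \oplus N_i$ is a bijective image of $N_i$, so $H(\bar{Y}\mid U) = H(\bar{N}) = \sum_{i=1}^K H(N_i) = K H(p) = -Kp\log p - Kq\log q$; its negative supplies exactly the terms $Kp\log p + Kq\log q$ in \eqref{eq:simple-evaluation}. For $H(\bar{Y})$ the key observation is that, because $U$ is uniform and independent of $\bar{N}$, for every $\bar{y}\in\{0,1\}^K$ we have $\Pr\{\bar{Y}=\bar{y}\} = \tfrac12\Pr\{\bar{N}=\bar{y}\} + \tfrac12\Pr\{\bar{N}=\bar{y}\oplus\bar{1}\}$, which depends on $\bar{y}$ only through its Hamming weight $l$ and equals $\tfrac12 a_l$, where $a_l \triangleq p^l q^{K-l} + p^{K-l} q^l$ (the same quantity as $q^l p^{K-l}+q^{K-l}p^l$ written in \eqref{eq:simple-evaluation}). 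Counting the $\binom{K}{l}$ weight-$l$ vectors gives $H(\bar{Y}) = -\sum_{l=0}^{K}\binom{K}{l}\tfrac12 a_l\log(\tfrac12 a_l)$; expanding the logarithm and using $\sum_{l=0}^K\binom{K}{l}p^l q^{K-l} = \sum_{l=0}^K\binom{K}{l}p^{K-l}q^l = 1$ kills the $\log\tfrac12$ contribution and leaves $H(\bar{Y}) = 1 - \tfrac12\sum_{l=0}^K\binom{K}{l}a_l\log a_l$.

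What remains is purely combinatorial: rewriting $\tfrac12\sum_{l=0}^{K}\binom{K}{l}a_l\log a_l$ as $\sum_{l=0}^{K-1}\binom{K-1}{l}a_l\log a_l$. I would apply Pascal's rule $\binom{K}{l} = \binom{K-1}{l} + \binom{K-1}{l-1}$, reindex the $\binom{K-1}{l-1}$ piece, and use the weight-reversal symmetry $a_l = a_{K-l}$ together with $\binom{K-1}{l} = \binom{K-1}{K-1-l}$; the two resulting sums coincide, producing the factor of $2$ that cancels the $\tfrac12$. Substituting $H(\bar{Y})$ and $H(\bar{Y}\mid U)$ back into $I(U;\bar{Y}) = H(\bar{Y}) - H(\bar{Y}\mid U)$ then yields \eqref{eq:simple-evaluation} term by term. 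I expect this last index manipulation — keeping the $\log\tfrac12$ bookkeeping straight and invoking the symmetry of $a_l$ to fold the summation range from $\{0,\dotsc,K\}$ down to $\{0,\dotsc,K-1\}$ — to be the only step needing care; everything else is a routine evaluation of the entropies of explicit finite distributions.
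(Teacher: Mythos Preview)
Your proposal is correct and follows the same approach the paper indicates: the paper's proof is a single sentence stating that \eqref{eq:simple-evaluation} follows by evaluating $I(U;\bar{Y})$ from \eqref{eq:achievable-rate-coded-forward} under the uniform input and symmetrical parameters, and your write-up supplies exactly that evaluation in full detail. In particular, your Pascal--symmetry folding of $\tfrac12\sum_{l=0}^K\binom{K}{l}a_l\log a_l$ into $\sum_{l=0}^{K-1}\binom{K-1}{l}a_l\log a_l$ is valid (using $a_l=a_{K-l}$ and $\binom{K-1}{l}=\binom{K-1}{K-1-l}$), so the final expression matches \eqref{eq:simple-evaluation} term by term.
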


\begin{proof}
Eqn.~\eqref{eq:simple-evaluation} follows by evaluating the RHS of \eqref{eq:achievable-rate-coded-forward} with $p_{s,i} = p_s$, $p_{i,d} = p_d$, $\forall i \in \{1,2,\dotsc,K\}$, and the uniform distribution $p'(u)$.
\end{proof}


\begin{remark}
Note that evaluation of the aforementioned rate has factorial time complexity in the number of relays, $K$. This also has implications for the evaluation of the capacity upper bound $R_\textnormal{ub}$ in Theorem~\ref{theorem:cut-set-upper-bound}.
\end{remark}


\vspace{1ex} \noindent {\bf General case: possibly different $p_{s,i}$ and $p_{i,d}$}
\newline\indent 
Now, we derive a lower bound for $R_\textnormal{coded,f}$ that allows us to determine the performance of this scheme in the general BSPR network as the number of relays increases. First, we define
\begin{equation}
p_{\textnormal{MAX}(\mathcal{M})} \triangleq \max_{m_i \in \mathcal{M}} p_{m_i}, \label{eq:m}
\end{equation}
for some $\mathcal{M} =\{m_1,m_2,\dotsc, m_M\} \subseteq \{1,2,\dotsc,K\}$. This means there exist $M$ (out of $K$) relays where the effective channels from the source to the destination via these $M$ relays are each not noisier than a binary-symmetric channel with cross-over probability $p_{\textnormal{MAX}(\mathcal{M})}$.
We can show that the following rate is achievable.

\begin{corollary}\label{theorem:general-coded-forwarding-rate}
Consider the BSPR network. Coded transmission with forwarding relays achieves any rate $R < R_\textnormal{coded,f}$ where 
\begin{multline}
R_\textnormal{coded,f} \geq  1 + Mp_{\textnormal{MAX}(\mathcal{M})} \log p_{\textnormal{MAX}(\mathcal{M})}\\
\quad\quad \quad + M(1-p_{\textnormal{MAX}(\mathcal{M})})\log (1-p_{\textnormal{MAX}(\mathcal{M})})\\ - \sum_{l=0}^{M-1} \Bigg[ \binom{M-1}{l} (q_{\textnormal{MAX}(\mathcal{M})}^lp_{\textnormal{MAX}(\mathcal{M})}^{M-l} + q_{\textnormal{MAX}(\mathcal{M})}^{M-l}p_{\textnormal{MAX}(\mathcal{M})}^l)\\
\quad\quad \quad\quad\log (q_{\textnormal{MAX}(\mathcal{M})}^lp_{\textnormal{MAX}(\mathcal{M})}^{M-l} + q_{\textnormal{MAX}(\mathcal{M})}^{M-l}p_{\textnormal{MAX}(\mathcal{M})}^l) \Bigg],\label{eq:coded-general}
\end{multline} 
for any $\mathcal{M} \subseteq \{1,2,\dotsc,K\}$,
where $q_{\textnormal{MAX}(\mathcal{M})} = 1-p_{\textnormal{MAX}(\mathcal{M})}$, and $M = |\mathcal{M}|$.
\end{corollary}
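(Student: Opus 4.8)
The plan is to start from the identity $R_\textnormal{coded,f} = I(U;\bar{Y})$ established in Theorem~\ref{thm:achievability-simple}, with the mutual information evaluated under the uniform input $p'(u)$ and with the effective channel from $U$ to $Y_i$ being a BSC of cross-over probability $p_i$ as in \eqref{eq:equivalent-cross-over}. Fix an arbitrary $\mathcal{M} = \{m_1,\dots,m_M\} \subseteq \{1,\dots,K\}$ and write $\bar{Y}_{\mathcal{M}} \triangleq (Y_{m_1},\dots,Y_{m_M})$. Since discarding observations cannot increase mutual information, $I(U;\bar{Y}) \geq I(U;\bar{Y}_{\mathcal{M}})$, so it suffices to lower-bound the mutual information between $U$ and the outputs of the $M$ relays indexed by $\mathcal{M}$.

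The second step is a channel-degradation argument. For each $m_i \in \mathcal{M}$ we have $p_{m_i} \leq p_{\textnormal{MAX}(\mathcal{M})} \leq \tfrac{1}{2}$, so a BSC with cross-over probability $p_{\textnormal{MAX}(\mathcal{M})}$ can be obtained by cascading the BSC($p_{m_i}$) with an independent BSC whose cross-over probability $\delta_i \in [0,\tfrac{1}{2}]$ satisfies $p_{m_i}(1-\delta_i) + (1-p_{m_i})\delta_i = p_{\textnormal{MAX}(\mathcal{M})}$ (such a $\delta_i$ exists precisely because $p_{m_i} \leq p_{\textnormal{MAX}(\mathcal{M})} \leq \tfrac{1}{2}$). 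Because the relays merely forward, the $Y_i$ are conditionally independent given $U$, so degrading each coordinate independently in this way yields a channel $U \to \tilde{\bar{Y}}_{\mathcal{M}}$ that is a stochastically degraded version of $U \to \bar{Y}_{\mathcal{M}}$. The data-processing inequality then gives $I(U;\bar{Y}_{\mathcal{M}}) \geq I(U;\tilde{\bar{Y}}_{\mathcal{M}})$, where $\tilde{\bar{Y}}_{\mathcal{M}}$ is the output of $M$ parallel BSCs, each with cross-over probability $p_{\textnormal{MAX}(\mathcal{M})}$, driven by a uniform $U$.

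Finally, $I(U;\tilde{\bar{Y}}_{\mathcal{M}})$ is exactly the quantity analyzed in the symmetrical case with $K$ relays replaced by $M$ relays and common cross-over probability $p_{\textnormal{MAX}(\mathcal{M})}$; thus Corollary~\ref{theorem:simple-evaluation}, specialized by substituting $M$ for $K$ and $p_{\textnormal{MAX}(\mathcal{M})}$ for $p$, evaluates it in closed form and produces precisely the right-hand side of \eqref{eq:coded-general}. Chaining the three inequalities proves the bound, and since $\mathcal{M}$ was arbitrary it holds for every subset. The step requiring the most care is the degradation argument: one must confirm that appending independent BSC noise to each coordinate genuinely gives a stochastically degraded channel (which hinges on the conditional independence of the $Y_i$ given $U$ under forwarding) and that the intermediate cross-over probabilities $\delta_i$ lie in $[0,\tfrac{1}{2}]$; the remaining two steps are the elementary "more observations cannot hurt" inequality and a direct substitution into the already-derived symmetrical formula.
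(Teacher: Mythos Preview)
Your proposal is correct and follows essentially the same approach as the paper: restrict to the subset $\mathcal{M}$, degrade each BSC($p_{m_i}$) to a BSC($p_{\textnormal{MAX}(\mathcal{M})}$) by cascading with an independent BSC (the paper gives the explicit cross-over probability $\delta_i = \frac{p_{\textnormal{MAX}(\mathcal{M})}-p_{m_i}}{1-2p_{m_i}}$, which is exactly the solution of your equation for $\delta_i$), and then invoke Corollary~\ref{theorem:simple-evaluation}. The only cosmetic difference is that the paper phrases the monotonicity steps operationally (``if $R$ is achievable with $M$ relays, it is achievable with $K$''; ``if $R'$ is achievable on the degraded network, then $R\geq R'$ is achievable on the original''), whereas you phrase them as mutual-information inequalities via the data-processing inequality.
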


\begin{proof}[Proof of Corollary~\ref{theorem:general-coded-forwarding-rate}]
If a rate $R$ is achievable using some $M$ relays, then $R$ is also achievable in the original network with $K$ relays. Among these $M$ relays, the maximum $p_{m_i}$ is denoted by $p_{\textnormal{MAX}(\mathcal{M})}$. For each sub-channel from $U$ to $Y_{m_i}$ with cross-over probability $p_{m_i}$, we further add random noise to get $Y'_{m_i} = Y_{m_i} \oplus E_{m_i}''$, with some independent and random $E_{m_i}'' \in \{0,1\}$ where $\Pr\{E_{m_i}''=1\} = \frac{p_{\textnormal{MAX}(\mathcal{M})}-p_{m_i}}{1-2p_{m_i}}$.
In this modified network, we have
\begin{equation}
Y'_{m_i} = U \oplus N'_{m_i},
\end{equation}
where $N'_{m_1} = Z_{m_i} \oplus E_{m_i} \oplus E_{m_i}''$ and $\Pr\{N'_{m_i} =1\} =p_{\textnormal{MAX}(\mathcal{M})}$, $\forall m_i \in \mathcal{M}$. If the rate $R'$ is achievable on the modified network from $U$ to $(Y'_{m_1}. Y'_{m_2}, \dotsc, Y'_{m_M})$, then the rate $R \geq R'$ is achievable in the original network using only $M$ relays and is also achievable in the original network with $K$ relays. Using the results from Corollary~\ref{theorem:simple-evaluation}, we have Corollary~\ref{theorem:general-coded-forwarding-rate}.
\end{proof}

\begin{remark} \label{remark:m}
We will show in the next section (see Remark~\ref{remark:m-proof}) that as long as we can find a sequence of $\mathcal{M} \subseteq \{1,2,\dotsc,K\}$ (one for each $K$), such that $|\mathcal{M}|(0.5-p_{\textnormal{MAX}(\mathcal{M})}) \rightarrow \infty$ as $K \rightarrow \infty$, then coded transmission with forwarding relays approaches the capacity.
\end{remark}

\begin{remark} \label{remark:m-2}
The condition in Remark~\ref{remark:m} is not unreasonable as long as when the number of relays increases, the number of ``bad'' channels (with cross-over probabilities $p_i$ close to 0.5) can be kept at or below a certain fraction of $K$. Another example of a network satisfying the condition is one in which as $K$ increases, the maximum cross-over probability is unchanged at some $p_{\textnormal{MAX}(\{1,2,\dotsc,K\})} = p_\textnormal{const} < 0.5$.
\end{remark}


\subsection{Uncoded Transmission with Forwarding Relays}\label{sec:uncoded-transmission}

We now investigate the second scheme where the source and all the relays send uncoded signals. We first split the source message into $n$ bits, i.e., $W = (W_1,W_2,\dotsc, W_n)$, where each $W_i$ is uniformly distributed in $\{0,1\}$.  We use the following encoding functions:
\begin{enumerate}
\item Uncoded transmission at the source: $U[t] = W_t$,
\item Forwarding at the relays: $X_i[t+1] = V_i[t]$,
\end{enumerate}
for all $t \in \{1,2,\dotsc, n\}$.

\begin{remark}
In the aforementioned scheme, we send $n$ bits in $(n+1)$ network uses. So the rate here is $\frac{n}{n+1}$, which can be made arbitrarily close to 1 with a sufficiently large $n$.
\end{remark}

\begin{remark}
Note that the analysis in this section is slightly different from that in the previous section. In the previous section, we derived achievable rates as functions of network parameters. In this section, we fix the transmission rate at (arbitrarily close to) 1~bit/network use, which is a capacity upper bound, and analyze under what condition this rate is achievable.
\end{remark}

\vspace{1ex} \noindent {\bf Symmetrical case: $p_{s,i} = p_s$ and $p_{i,d} = p_d$, $\forall i$}

We have the following asymptotic capacity result as the network size increases using this scheme.

\begin{theorem}\label{thm:uncoded}
Consider the symmetrical BSPR network.  
For any $\epsilon > 0$ and at any rate $R <1 \triangleq R_\textnormal{uncoded,f}$, uncoded transmission with forwarding relays achieves $P_\textnormal{e} \leq \epsilon$ with sufficiently large $n$ and $K$.
\end{theorem}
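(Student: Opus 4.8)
\emph{Proof proposal.} The plan is to turn the uncoded/forwarding scheme into $n$ independent instances of ``send one bit over $K$ parallel copies of the effective channel,'' and to decode each bit by a majority vote whose error probability vanishes as $K\to\infty$. Concretely, with $U[t]=W_t$ and $X_i[t+1]=V_i[t]$ we obtain, exactly as in \eqref{eq:equivalent-channel-2}, $Y_i[t+1]=W_t\oplus N_i[t]$ with $N_i[t]\triangleq Z_i[t]\oplus E_i[t+1]$. In the symmetrical case the collection $\{N_i[t]\}_{i,t}$ is i.i.d.\ Bernoulli with $\Pr\{N_i[t]=1\}=p=p_s(1-p_d)+(1-p_s)p_d$, and $p<\tfrac12$ precisely because $p_s<\tfrac12$ and $p_d<\tfrac12$. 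Thus the destination sees each source bit $W_t$ through $K$ conditionally independent BSC($p$) channels.

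For the decoder I would take $K$ odd (admissible, since the claim concerns sufficiently large $K$) and set $\hat W_t$ equal to the majority value among $Y_1[t+1],\dots,Y_K[t+1]$. By the $0\leftrightarrow 1$ symmetry of the channel, the per-bit error probability is $\Pr\{\hat W_t\neq W_t\}=\Pr\{\sum_{i=1}^K N_i[t]>K/2\}$. Since the $N_i[t]$ are i.i.d.\ with mean $p<\tfrac12$, the weak law of large numbers forces this probability to $0$ as $K\to\infty$; in fact Hoeffding's inequality gives the explicit bound $\Pr\{\sum_{i=1}^K N_i[t]>K/2\}\le \exp\!\big(-2K(\tfrac12-p)^2\big)\triangleq\delta_K$, so $\delta_K\to 0$.

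To finish, I would apply a union bound over the $n$ source bits: $P_\textnormal{e}=\Pr\{\hat W\neq W\}\le n\,\delta_K$. Given $\epsilon>0$ and $R<1$, first choose $n$ large enough that $\tfrac{n}{n+1}>R$, so that the scheme --- which conveys $n$ bits in $n+1$ network uses --- operates at a rate exceeding $R$; then, holding this $n$ fixed, choose an odd $K$ large enough that $n\,\delta_K\le\epsilon$, which is possible because $\delta_K\to 0$. For this pair $(n,K)$ the scheme achieves $P_\textnormal{e}\le\epsilon$, which is what the theorem asserts.

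The one delicate point --- the ``main obstacle,'' such as it is --- is the order of the two limits. Because the union bound over the $n$ bits costs a factor $n$, one must pick $K$ \emph{after} $n$ and large relative to it; fixing $K$ and letting $n\to\infty$ would make $n\,\delta_K$ diverge. The hypothesis $p<\tfrac12$ (equivalently $p_s,p_d<\tfrac12$) is exactly what makes $\delta_K$ go to zero, and everything else is a routine large-deviation estimate.
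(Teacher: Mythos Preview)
Your proposal is correct and follows essentially the same route as the paper: pass to the effective BSC($p$), decode each source bit by majority vote over the $K$ observations, and bound the per-bit error via Hoeffding's inequality $\exp\!\big(-2K(\tfrac12-p)^2\big)$. The paper's own proof stops at the per-bit bound and does not spell out the union bound over the $n$ bits or the order in which $n$ and $K$ are chosen, so your version is in fact slightly more careful than the original on exactly the point you flagged as the ``main obstacle.''
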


\begin{proof}
See Appendix~\ref{appendix:uncoded-symmetrical}.
\end{proof}

From Theorem~\ref{theorem:upper-bounds-capacity}, we know that 1~bit/network use is an upper bound to the capacity. Hence uncoded transmission with forwarding relays achieves the asymptotic capacity as $K$ tends to infinity.

As the message is uncoded, the achievable rate is obtained not by typical-set decoding, but by maximum likelihood decoding where the destination simply decodes $\hat{W}=1$ if there are more 1's than there are 0's in its received signals $\bar{Y}$.

\vspace{1ex}\noindent {\bf General case: possibly different $p_{s,i}$ and $p_{i,d}$}
\newline\indent Now, consider the general case, and $p_\textnormal{MAX}(\mathcal{M})$ defined in \eqref{eq:m} for some $\mathcal{M} \subseteq \{1,2,\dotsc,K\}$. We have the following corollary:

\begin{corollary}\label{theorem:uncoded-general}
Consider the BSPR network. If there exists an $\mathcal{M} \subseteq \{1,2,\dotsc,K\}$ for each $K$ such that
\begin{equation}
|\mathcal{M}|(0.5 - p_{\textnormal{MAX}(\mathcal{M})})^2 \rightarrow \infty \quad \textnormal{ as } \quad K \rightarrow \infty, \label{eq:uncoded-general-condition}
\end{equation}
then for any $\epsilon > 0$ and at any rate $R < 1$, uncoded transmission with forwarding relays achieves $P_\textnormal{e} \leq \epsilon$ with sufficiently large $n$ and $K$.
\end{corollary}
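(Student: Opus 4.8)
The plan is to reduce the general network to the symmetrical one already treated in Theorem~\ref{thm:uncoded}, using exactly the channel-degradation argument from the proof of Corollary~\ref{theorem:general-coded-forwarding-rate}, and then to redo the error analysis in a quantitative form, since here both the number of useful relays and their common cross-over probability are allowed to vary with $K$. The relays still do pure forwarding, which is a per-relay operation, so every effective channel $U\to Y_i$ is a BSC with cross-over $p_i$ as in \eqref{eq:equivalent-cross-over}; the only thing that changes is which relays the destination chooses to exploit and how.

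First I would fix $K$ and the associated set $\mathcal{M}=\{m_1,\dotsc,m_M\}$, $M=|\mathcal{M}|$, noting that $p_{\textnormal{MAX}(\mathcal{M})}<\tfrac12$ for all large $K$ (otherwise the left-hand side of \eqref{eq:uncoded-general-condition} would be $0$). The destination ignores the outputs of the relays not in $\mathcal{M}$ and, as in the proof of Corollary~\ref{theorem:general-coded-forwarding-rate}, passes each $Y_{m_i}$ through an independent BSC of cross-over $\frac{p_{\textnormal{MAX}(\mathcal{M})}-p_{m_i}}{1-2p_{m_i}}$, which lies in $[0,1)$ precisely because $p_{m_i}\le p_{\textnormal{MAX}(\mathcal{M})}<\tfrac12$. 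The composite channel $U\to Y'_{m_i}$ is then a BSC with cross-over exactly $p_{\textnormal{MAX}(\mathcal{M})}$ for every $m_i\in\mathcal{M}$. Since this extra processing is something the destination can simulate, the error probability of the resulting scheme upper-bounds $P_\textnormal{e}$ of the original network, and what remains is a symmetrical BSPR network with $M$ forwarding relays and common effective cross-over $p_{\textnormal{MAX}(\mathcal{M})}$, to which the decoder of Theorem~\ref{thm:uncoded} applies: decode $\hat{W}_t$ by majority vote over $Y'_{m_1}[t+1],\dotsc,Y'_{m_M}[t+1]$.

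Then the $t$-th bit is in error only if at least $M/2$ of the $M$ i.i.d.\ $\mathrm{Bernoulli}(p_{\textnormal{MAX}(\mathcal{M})})$ noise variables equal $1$, which by Hoeffding's inequality has probability at most $\exp\!\big(-2M(\tfrac12-p_{\textnormal{MAX}(\mathcal{M})})^2\big)$ (the tie case is harmlessly absorbed by declaring an error whenever the count of $1$'s is $\ge M/2$). A union bound over the $n$ message bits gives $P_\textnormal{e}\le n\exp\!\big(-2|\mathcal{M}|(\tfrac12-p_{\textnormal{MAX}(\mathcal{M})})^2\big)$. Given $\epsilon>0$ and $R<1$, choose $n$ large enough that the rate $n/(n+1)\ge R$; since \eqref{eq:uncoded-general-condition} forces $|\mathcal{M}|(0.5-p_{\textnormal{MAX}(\mathcal{M})})^2\to\infty$, this bound drops below $\epsilon$ for all sufficiently large $K$, which proves the corollary. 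There is no real obstacle here; the content lies entirely in pairing the degradation reduction with a Chernoff-type bound that is uniform in the varying parameters $M$ and $p_{\textnormal{MAX}(\mathcal{M})}$, and the only care needed is the bookkeeping already flagged: validity of the degradation cross-over as a probability, the fact that restricting the decoder to $\mathcal{M}$ cannot help the adversary, and the treatment of ties in the majority vote.
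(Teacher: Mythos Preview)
Your proposal is correct and follows essentially the same route as the paper: restrict to the relays in $\mathcal{M}$, degrade each $Y_{m_i}$ by an auxiliary BSC of cross-over $\frac{p_{\textnormal{MAX}(\mathcal{M})}-p_{m_i}}{1-2p_{m_i}}$ to obtain a symmetrical network, apply the majority-vote decoder, and invoke Hoeffding's inequality to get $P_\textnormal{e}\le\exp\!\big(-2M(\tfrac12-p_{\textnormal{MAX}(\mathcal{M})})^2\big)$. The only difference is cosmetic: you add an explicit union bound over the $n$ message bits and then order the quantifiers (fix $n$ from $R$, then take $K$ large), whereas the paper states the per-bit bound directly and leaves the ordering implicit; your version is slightly more careful but not materially different.
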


\begin{proof}
See Appendix~\ref{appendix:uncoded-general}.
\end{proof}

The observation in Remark~\ref{remark:m-2} is also applicable to condition \eqref{eq:uncoded-general-condition}.

The maximum likelihood decoding rule for the symmetrical case (i.e., by counting the number of 1's in $\bar{Y}$) might not be optimal for the general BSPR network. However, this suboptimal decoding rule is sufficient to show the asymptotic result in Corollary~\ref{theorem:uncoded-general}.

The aforementioned results imply that with a sufficient number of relays, we do not require a channel code to achieve the capacity. This is because the relays provide sufficient spatial diversity for the destination to reliably determine the source message sent.

\subsection{Coded Transmission with Decoding Relays}\label{sec:coded-transm-with-reenc}

In the sub-sections above, we have derived achievable rates for the BSPR network with forwarding relays, and asymptotic capacity results have been obtained. In this section, we investigate a coding scheme using \emph{decoding relays}, where the relays decode the source messages, re-encode, and forward them to the destination. This coding scheme is also known as decode-and-forward~\cite{kramergastpar04}.

We obtain the following rate with decoding at some relays.
\begin{theorem}\label{theorem:coded-relay}
Consider the BSPR network. Coded transmission with decoding relays achieves any rate $R < R_\textnormal{coded,d}$, where
\begin{align}
R_\textnormal{coded,d} \triangleq \min \Bigg\{ &\Big\{ 1 - H(p_{s,m_i}): \forall m_i \in \mathcal{M} \Big\}, \nonumber \\
&M - \sum_{i=1}^M H(p_{m_i,d})\Bigg\}, \label{eq:coded-relay-rate}
\end{align}
for any $\mathcal{M} = \{m_1, m_2,\dotsc,m_M\} \subseteq \{1,2,\dotsc,K\}$ where $M= |\mathcal{M}|$. 
\end{theorem}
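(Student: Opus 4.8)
The plan is to run a textbook decode-and-forward (DF) scheme that uses only the $M$ relays indexed by $\mathcal{M}$. Since the remaining $K-M$ relays can simply be silenced (set $X_i[t]=0$), any rate achievable with the relays in $\mathcal{M}$ is also achievable in the full network, so it suffices to treat the case where the relay set is exactly $\mathcal{M}=\{m_1,\dots,m_M\}$. To respect the causality constraint $X_i[t]=f_{i,t}(V_i[1],\dots,V_i[t-1])$ without a rate penalty in the limit, I would pipeline the communication over $B$ consecutive blocks of length $n$: in block $b$ the source sends a fresh message $w_b$ with a length-$n$ code; each relay in $\mathcal{M}$ decodes $w_b$ from the symbols received during block $b$; in block $b+1$ every relay re-encodes $w_b$ and forwards it, and the destination decodes $w_b$ at the end of block $b+1$. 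This sends $B$ messages in $B+1$ blocks, giving effective rate $\tfrac{B}{B+1}R\to R$, so it is enough to show the per-block rate $R$ is achievable by a random-coding argument (followed by the usual derandomization).

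For the source-to-relay hop I would use a codebook of $2^{nR}$ codewords drawn i.i.d.\ $\mathrm{Bernoulli}(\tfrac12)$. Relay $m_i$ observes $\boldsymbol{V}_{m_i}=\boldsymbol{U}\oplus\boldsymbol{Z}_{m_i}$, i.e., the output of a BSC with crossover $p_{s,m_i}$ fed by the uniform input, so by the point-to-point channel coding theorem \cite[Thm.\ 7.7.1]{coverthomas06} it decodes $w_b$ with vanishing error probability whenever $R<I(U;V_{m_i})=1-H(p_{s,m_i})$ (the uniform input being optimal on a BSC \cite[Thm.\ 4.5.2]{gallager68}). A union bound over the $M$ relays shows that all of them decode correctly with probability tending to $1$ provided $R<\min_{m_i\in\mathcal{M}}\{1-H(p_{s,m_i})\}$.

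For the relay-to-destination hop I would generate a second codebook mapping each message to an $M$-tuple of length-$n$ binary codewords, each coordinate i.i.d.\ $\mathrm{Bernoulli}(\tfrac12)$ and independent across coordinates, and have relay $m_i$ transmit the $i$th coordinate of the codeword for $w_b$. Conditioned on all relays having decoded $w_b$ correctly, they carry a common message, so this hop is a single point-to-point channel from $(X_{m_1},\dots,X_{m_M})$ to $(Y_{m_1},\dots,Y_{m_M})$ consisting of $M$ independent BSCs; with the chosen product input its mutual information equals $\sum_{i=1}^{M}\bigl(1-H(p_{m_i,d})\bigr)=M-\sum_{i=1}^{M}H(p_{m_i,d})$, so the destination decodes $w_b$ reliably whenever $R<M-\sum_{i=1}^{M}H(p_{m_i,d})$. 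Combining the two hops via a union bound over all error events (some relay mis-decoding in some block, or the destination mis-decoding in some block) makes $P_\textnormal{e}$ arbitrarily small for large $n$ and $B$ at any rate $R$ below the minimum of the two thresholds, which is precisely $R_\textnormal{coded,d}$.

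The part I expect to require the most care is the interaction between causality and the error analysis: one must check that the pipelined scheme genuinely obeys $X_i[t]=f_{i,t}(V_i[1],\dots,V_i[t-1])$ while paying only the vanishing $\tfrac{B}{B+1}$ factor, and one must guarantee that a relay which decoded some block in error does not contaminate the destination's decoding of other blocks — handled by conditioning on the high-probability event that every relay decodes every block correctly and bounding its complement separately. The rest is a routine application of the point-to-point coding theorem to the source-to-relays broadcast link and to the parallel-BSC relays-to-destination link; since in DF all relays relay the same message, no joint or correlated input distribution is needed.
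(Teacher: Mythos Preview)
Your proposal is correct and follows essentially the same approach as the paper: a block-Markov decode-and-forward scheme using only the relays in $\mathcal{M}$ (the others silenced), uniform random codebooks at the source and independent uniform codebooks at the relays, with the per-hop rate constraints coming directly from the point-to-point channel coding theorem applied to each BSC $U\to V_{m_i}$ and to the product channel $(X_{m_1},\dots,X_{m_M})\to(Y_{m_1},\dots,Y_{m_M})$. The only cosmetic difference is the block accounting ($B$ messages in $B{+}1$ blocks versus the paper's $B{-}1$ messages in $B$ blocks), and your explicit discussion of causality and error propagation simply spells out what the paper leaves implicit.
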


\begin{proof}
See Appendix~\ref{appendix:coded}
\end{proof}

\begin{remark}
The term $M - \sum_{i=1}^M H(p_{m_i,d}) = \sum_{i=1}^M \max_{p(x_{m_i})}I(X_{m_i};Y_{m_i})$ is the sum capacity of the individual point-to-point channels $X_{m_i} \rightarrow Y_{m_i}$. The term $1 - H(p_{s,m_i}) = \max_{p(u)} I(U;V_{m_i})$ is the capacity from the source to relay $m_i$.
\end{remark}

In the aforementioned coding scheme, only $M$ out of the $K$ relays are utilized, and the rest are not used. Although using more decoding relays increases the rate at which the destination can decode the source message [this can be seen from the term $M - \sum_{i=1}^M H(p_{m_i,d}) $ in \eqref{eq:coded-relay-rate}], at the same time, it imposes an additional constraint  $1 - H(p_{s,m_i})$ in \eqref{eq:coded-relay-rate} for each relay $m_i$ used, as each decoding relay must fully decode the source message. 

Although forwarding relays achieve the capacity asymptotically when $K \rightarrow \infty$, decoding relays achieve the capacity under the following conditions:

\begin{theorem}\label{theorem:finite-k-capacity}
Consider the BSPR network. Coded transmission with decoding relays achieves the capacity under the following conditions:
\begin{enumerate}
\item If $K=1$, then
\begin{equation}
C = 1 - \max \{ H(p_{s,1}), H(p_{1,d}) \} 
\end{equation}
\item If $K>1$, and if $K - \sum_{j=1}^K H(p_{j,d}) \leq 1 - H(p_{s,i}) $ for all  $i \in \{1,2,\dotsc,K\},$
then
\begin{equation}
C = K - \sum_{i=1}^K H(p_{i,d}).\label{decoding-relay-capacity}
\end{equation}
\end{enumerate}
\end{theorem}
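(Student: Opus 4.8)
The plan is a standard sandwich argument: the decode-and-forward rate of Theorem~\ref{theorem:coded-relay} supplies the lower bound, and the cut-set-type bound $R_\textnormal{ub}$ of Theorem~\ref{theorem:cut-set-upper-bound} supplies the matching upper bound; in each of the two cases one chooses the set $\mathcal{M}$ and uses the stated hypothesis to force the two $\min$'s to select the same term. First I would recall from Lemma~\ref{lemma:symmetric} (and the discussion following it) that $\max_{p(u)} I(U;\bar V)$ is attained at the uniform input $p'(u)$, and that under $p'$ the $V_i$ are conditionally independent given $U$, so that $\max_{p(u)} I(U;\bar V) = I(U;\bar V)\big|_{p'}$ with $H(\bar V\mid U)\big|_{p'}=\sum_i H(p_{s,i})$.

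For $K=1$: apply Theorem~\ref{theorem:coded-relay} with $\mathcal{M}=\{1\}$, $M=1$, giving the achievable rate $\min\{1-H(p_{s,1}),\,1-H(p_{1,d})\}=1-\max\{H(p_{s,1}),H(p_{1,d})\}$. On the converse side, for $K=1$ we have $\max_{p(u)}I(U;\bar V)=\max_{p(u)}I(U;V_1)=1-H(p_{s,1})$ and $K-\sum_{i=1}^K H(p_{i,d})=1-H(p_{1,d})$, so $R_\textnormal{ub}=\min\{1-H(p_{s,1}),1-H(p_{1,d})\}$ coincides with the achievable rate, and $C=1-\max\{H(p_{s,1}),H(p_{1,d})\}$.

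For $K>1$ under the hypothesis $K-\sum_{j=1}^K H(p_{j,d})\le 1-H(p_{s,i})$ for all $i$: apply Theorem~\ref{theorem:coded-relay} with $\mathcal{M}=\{1,2,\dotsc,K\}$, $M=K$; the achievable rate is $\min\{\{1-H(p_{s,i})\}_{i},\,K-\sum_{i=1}^K H(p_{i,d})\}$, which equals $K-\sum_{i=1}^K H(p_{i,d})$ precisely because of the hypothesis. For the converse, it suffices to show that in $R_\textnormal{ub}$ the second term is the smaller one, i.e. $\max_{p(u)}I(U;\bar V)\ge K-\sum_{i=1}^K H(p_{i,d})$. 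This follows from $I(U;\bar V)\ge I(U;V_i)$ for any fixed $i$ (the other coordinates of $\bar V$ can only increase the mutual information), which evaluated at $p'$ gives $\max_{p(u)}I(U;\bar V)\ge 1-H(p_{s,i})$, and then the hypothesis $1-H(p_{s,i})\ge K-\sum_j H(p_{j,d})$ closes the gap. Hence $R_\textnormal{ub}=K-\sum_{i=1}^K H(p_{i,d})$, matching the achievable rate, so \eqref{decoding-relay-capacity} holds.

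The only non-bookkeeping step is the converse reduction $\max_{p(u)}I(U;\bar V)\ge K-\sum_i H(p_{i,d})$; I expect this to be the main (mild) obstacle, since one must make sure the single-coordinate lower bound $I(U;\bar V)\ge I(U;V_i)$ is legitimately evaluated at the maximizing uniform input and then correctly chained with the theorem's hypothesis. Everything else is a direct substitution of $\mathcal{M}$ and simplification of the two $\min$ expressions.
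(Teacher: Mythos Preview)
Your proposal is correct and follows essentially the same approach as the paper: both compare the decode-and-forward lower bound (Theorem~\ref{theorem:coded-relay} with $\mathcal{M}=\{1\}$ for $K=1$ and $\mathcal{M}=\{1,\dotsc,K\}$ for $K>1$) against the upper bound $R_\textnormal{ub}$ of Theorem~\ref{theorem:cut-set-upper-bound}, and in the $K>1$ case both reduce the converse to the observation $I(U;\bar V)\ge I(U;V_i)$ combined with the stated hypothesis. The only difference is that you spell out more of the bookkeeping (uniform optimizer, conditional independence) than the paper does.
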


\begin{proof}
See Appendix~\ref{appendix:finite-k-capacity}.
\end{proof}

\begin{remark}
The condition $K>1$ in the aforementioned theorem corresponds to the case where that the sum of capacities of all $K$ channels from the relays to the destination is smaller than the capacity of the channel from the source to each relay. This condition is likely to hold for small $K$ and when the channels from the relays to the destination are noisy. This result resembles that of the single-relay channel where the decode-and-forward coding scheme outperforms other schemes when the source-to-relay channel is better than the relay-to-destination channel~\cite{kramergastpar04}.  
\end{remark}




\vspace{1ex}\noindent {\bf Symmetrical case: $p_{s,i} = p_s$ and $p_{i,d}=p_d$, $\forall i$}
\newline\indent For the symmetrical case, we have the following:
\begin{corollary}\label{lemma:decoding-relays-symmetrical}
Consider the symmetrical BSPR network. Coded transmission with decoding relays achieves any rate $R < R_\textnormal{coded,d}$, where
\begin{equation}
R_\textnormal{coded,d} = \min \{ 1 - H(p_s) , K ( 1 - H(p_d)) \}. \label{eq:df-symmetry}
\end{equation}
\end{corollary}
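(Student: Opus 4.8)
The plan is to obtain Corollary~\ref{lemma:decoding-relays-symmetrical} as a direct specialization of Theorem~\ref{theorem:coded-relay}, which already gives the achievable rate $R_\textnormal{coded,d}$ for the general BSPR network and for an arbitrary choice of the relay subset $\mathcal{M} \subseteq \{1,2,\dotsc,K\}$. First I would substitute the symmetry assumption $p_{s,i} = p_s$ and $p_{i,d} = p_d$ for all $i$ into \eqref{eq:coded-relay-rate}. The set $\{1 - H(p_{s,m_i}) : m_i \in \mathcal{M}\}$ collapses to the single value $1 - H(p_s)$ regardless of which relays are chosen, and the sum $M - \sum_{i=1}^M H(p_{m_i,d})$ becomes $M(1 - H(p_d))$, where $M = |\mathcal{M}|$. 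Hence for any $\mathcal{M}$ of size $M$, Theorem~\ref{theorem:coded-relay} yields the achievable rate $R < \min\{1 - H(p_s),\, M(1 - H(p_d))\}$.

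The second step is to optimize over the free parameter $M \in \{1,2,\dotsc,K\}$. Since $0 \le p_d \le \tfrac12$ we have $1 - H(p_d) \ge 0$, so $M \mapsto M(1-H(p_d))$ is nondecreasing in $M$; meanwhile the first term $1 - H(p_s)$ does not depend on $M$. Therefore the minimum is maximized by taking $M$ as large as possible, i.e.\ $M = K$ with $\mathcal{M} = \{1,2,\dotsc,K\}$, which gives the claimed expression $R_\textnormal{coded,d} = \min\{1 - H(p_s),\, K(1 - H(p_d))\}$ in \eqref{eq:df-symmetry}.

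There is essentially no genuine obstacle here; the corollary is a bookkeeping consequence of the already-proved theorem. The only point that warrants an explicit sentence is the monotonicity argument justifying $\mathcal{M} = \{1,2,\dotsc,K\}$ as the optimal choice — in words, adding more decoding relays in the symmetrical network does not tighten the source-to-relay constraint (it is the same for every relay) while it relaxes the relays-to-destination sum constraint, so using all $K$ relays is optimal. I would also remark, for completeness, that this coincides with instantiating the ``small-$K$'' capacity result: when $K(1 - H(p_d)) \le 1 - H(p_s)$ the achievable rate equals $K - K H(p_d)$, matching the cut-set-type upper bound $R_\textnormal{ub}$ of Theorem~\ref{theorem:cut-set-upper-bound} evaluated in the symmetrical case, although that identification is not needed for the corollary itself.
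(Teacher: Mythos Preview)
Your proposal is correct and follows essentially the same argument as the paper: specialize \eqref{eq:coded-relay-rate} under the symmetry assumption so the source-to-relay constraint collapses to $1-H(p_s)$ independently of $\mathcal{M}$, then observe that $M\mapsto M(1-H(p_d))$ is nondecreasing so $\mathcal{M}=\{1,\dotsc,K\}$ is optimal. The extra remark linking this to the cut-set bound is not needed for the corollary, but is harmless.
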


\begin{proof}
When $p_{s,i}=p_s$, $\forall i$, we have $1 - H(p_{s,i}) = 1 - H(p_s),\forall i$.
So, selecting any $\mathcal{M}$ will not affect $\min_{m_i \in \mathcal{M}} \{ 1 - H(p_{s,m_i})\}$ in \eqref{eq:coded-relay-rate}. 
In addition, adding more relays into the set $\mathcal{M}$ increases $[M-\sum_{i=1}^MH(p_{m_i,d})]$ as $H(p_{m_i,d}) < 1$. Hence, it is always optimal to set $\mathcal{M} = \{1,2,\dotsc,K\}$. So, \eqref{eq:df-symmetry} follows from \eqref{eq:coded-relay-rate} with $M=K$.
\end{proof}

Furthermore, when $p_d \leq p_s$, we have $1-H(p_d) \geq 1-H(p_s)$, and the achievable rate further simplifies to
\begin{equation}
R_\textnormal{coded,d} = 1 - H(p_s), \label{eq:ps-equalspd}
\end{equation}
which is independent of $K$.

\subsection{A Hybrid Coding Scheme}


Using coded transmission and forwarding relays, noise on the source-to-relays channels propagates to the relays-to-destination channels. This can be rectified by having the relays decode the source message before forwarding it (hence removing the noise on the source-to-relay channels). Doing so, however, imposes additional rate constraints that the relays must fully decode the source message. We next propose a hybrid coding scheme in  which some relays  decode and re-encode the messages, and the rest of the relays forward their received signals.

\subsubsection{Achievable Rates}

Denote the set of decoding relays by $\mathcal{M} \subseteq \{1,2,\dotsc, K\}$ and the set of forwarding relays by $\mathcal{F} = \{1,2,\dotsc,K\} \setminus \mathcal{M}$. Since all relays in the set $\mathcal{M}$ need to decode the messages, the rate is constrained by $ 1 - H\left( \max_{i \in \mathcal{M}} p_{s,i}\right)$.
Each relay $i \in \mathcal{M}$ fully decodes the source message and transmits using codewords $\boldsymbol{X}_i$ that are generated independent of $U$ and of the codewords of other relays $j \in \mathcal{M}$. Under this scheme, we can view the BSPR  network as a point-to-point MIMO channel from $(U,X_\mathcal{M})$ to $(Y_1,Y_2,\dotsc,Y_K)$, where $U$ and $X_i$, for all $i \in \mathcal{M}$, are statistically independent. This MIMO channel consists of  sub-channels $X_i \rightarrow Y_i$ for all $i \in \mathcal{M}$ (from the decoding relays to the destination), as well as $U \rightarrow X_\mathcal{F} \rightarrow Y_\mathcal{F}$ (from the source to the destination through forwarding relays). So, the destination can decode the source message at the rate $ I(U,X_{\mathcal{M}};Y_1,Y_2,\dotsc,Y_K) = I(U;Y_{\mathcal{F}}) + I(X_{\mathcal{M}};Y_{\mathcal{M}}) = I(U;Y_{\mathcal{F}}) + \sum_{i \in \mathcal{M}} I(X_i,Y_i)$. Hence, we have the following theorem.

\begin{theorem}\label{theorem:hybrid}
Consider the BSPR network. The hybrid scheme achieves any rate $R < R_\textnormal{coded,h}$, where
\begin{align}
R_\textnormal{coded,h} \triangleq \min &\Bigg\{ 1 - H\left( \max\limits_{i \in \mathcal{M}} p_{s,i}\right), \nonumber\\ &I(U;Y_{\mathcal{F}}) + \sum\limits_{i \in \mathcal{M}} \Big(1 - H(p_{i,d})\Big) \Bigg\}, \label{eq:hybrid}
\end{align}
for any $\mathcal{M} \subseteq \{1,2,\dotsc,K\}$ and  $\mathcal{F} = \{1,2,\dotsc,K\} \setminus \mathcal{M}$,
where $H\left( \max_{i \in \mathcal{M}} p_{s,i}\right)=0$ if $\mathcal{M} = \varnothing$. 
\end{theorem}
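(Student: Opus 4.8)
The plan is to turn the MIMO-channel picture sketched just above the statement into a rigorous random-coding argument, using a block-Markov construction to handle the processing delay at the decoding relays. Fix $\mathcal{M}$ and $\mathcal{F}=\{1,\dotsc,K\}\setminus\mathcal{M}$. Generate a source codebook $\{\boldsymbol{u}(w):w\in\mathcal{W}\}$ with i.i.d.\ uniform binary entries, and for each $i\in\mathcal{M}$ an independent codebook $\{\boldsymbol{x}_i(w):w\in\mathcal{W}\}$, also i.i.d.\ uniform and independent of the source codebook and of the codebooks of the other relays. Transmission runs over $B+1$ blocks of $n$ network uses each (plus a constant number of extra uses per block to absorb the forwarding delay): in block $b\le B$ the source sends $\boldsymbol{u}(w_b)$; every relay $i\in\mathcal{F}$ forwards, so through $\mathcal{F}$ the destination sees $\boldsymbol{u}(w_b)$ passed through the effective BSC$(p_i)$ of \eqref{eq:equivalent-cross-over}; and each relay $i\in\mathcal{M}$ transmits $\boldsymbol{x}_i(\hat{w}_{b-1,i})$, where $\hat{w}_{b-1,i}$ is its estimate of $w_{b-1}$ formed at the end of block $b-1$ (with $w_0$ a fixed dummy).

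First I would bound the decoding error at the relays. In block $b$, relay $i\in\mathcal{M}$ observes the output of a BSC$(p_{s,i})$ fed with $\boldsymbol{u}(w_b)$ from a rate-$R$ random codebook with uniform inputs; since uniform inputs are capacity-achieving for a BSC, the point-to-point channel coding theorem gives reliable recovery of $w_b$ whenever $R<1-H(p_{s,i})$. Because $\mathcal{M}$ is finite and $H$ is increasing on $[0,1/2]$, a union bound yields reliable decoding at \emph{all} relays in $\mathcal{M}$ simultaneously as long as $R<\min_{i\in\mathcal{M}}\bigl(1-H(p_{s,i})\bigr)=1-H\!\left(\max_{i\in\mathcal{M}}p_{s,i}\right)$, the first term in \eqref{eq:hybrid} (and vacuously $1$ when $\mathcal{M}=\varnothing$). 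Note also that a relay error in block $b$ does not propagate, since in block $b+1$ the relay decodes the fresh message $w_{b+1}$.

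Next I would analyse decoding at the destination. Conditioned on all relays in $\mathcal{M}$ having decoded correctly (so they transmitted $\boldsymbol{x}_i(w_b)$ in block $b+1$), the destination's observations relevant to $w_b$ are the block-$b$ outputs of the relays in $\mathcal{F}$ — a function of $\boldsymbol{u}(w_b)$ only — together with the block-$(b+1)$ outputs of the relays in $\mathcal{M}$ — a function of $\{\boldsymbol{x}_i(w_b)\}_{i\in\mathcal{M}}$ only; no other message affects these outputs (the relay-to-destination channels are orthogonal, so block-$(b+1)$ outputs from $\mathcal{M}$ depend on $w_b$ while those from $\mathcal{F}$ depend on $w_{b+1}$), and they are mutually independent given $w_b$ because codebooks and noises are independent. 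Hence $w_b$ travels through a point-to-point channel whose input is the super-codeword $\bigl(\boldsymbol{u}(w_b),(\boldsymbol{x}_i(w_b))_{i\in\mathcal{M}}\bigr)$ with a product uniform law, and joint-typicality decoding succeeds for $R<I(U,X_{\mathcal{M}};Y_{\mathcal{F}},Y_{\mathcal{M}})$. By the product structure this equals $I(U;Y_{\mathcal{F}})+I(X_{\mathcal{M}};Y_{\mathcal{M}})=I(U;Y_{\mathcal{F}})+\sum_{i\in\mathcal{M}}I(X_i;Y_i)$ (independent $X_i$ over orthogonal BSCs), and at uniform inputs $I(X_i;Y_i)=1-H(p_{i,d})$, giving the second term of \eqref{eq:hybrid}. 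Summing the relay error probabilities and the destination error probability over the $B$ messages, letting $n\to\infty$ drives $P_\textnormal{e}$ to $0$; and sending $B$ messages in $(B+1)(n+c)$ uses costs only a factor $\tfrac{B}{B+1}\cdot\tfrac{n}{n+c}\to 1$, so every $R$ strictly below the minimum in \eqref{eq:hybrid} is achievable.

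The main obstacle is getting the timing right. A decoding relay cannot emit a re-encoded codeword before observing a full block, so the source's contribution (through $\mathcal{F}$) and the decoding relays' contribution (through $\mathcal{M}$) for one message necessarily arrive at the destination in consecutive blocks; the construction must ensure this one-block offset introduces no inter-message interference (guaranteed by the orthogonality of the relay-to-destination channels) and that the combined per-message error — relay mis-decoding plus destination mis-decoding, aggregated over all $B$ messages — still vanishes, which follows from a union bound because $B$ and $K$ are fixed while $n\to\infty$. Everything else is routine: the $\mathcal{M}=\varnothing$ case recovers Theorem~\ref{thm:achievability-simple}, and the identification of $I(U;Y_{\mathcal{F}})$ with its uniform-input value uses the same symmetry argument as in Lemma~\ref{lemma:symmetric}.
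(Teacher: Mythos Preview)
Your proposal is correct and follows essentially the same approach as the paper: the paper's ``proof'' is just the paragraph immediately preceding the theorem statement, which sketches the MIMO-channel viewpoint (independent codebooks at the decoding relays, the decomposition $I(U,X_{\mathcal{M}};Y_{\mathcal{F}},Y_{\mathcal{M}})=I(U;Y_{\mathcal{F}})+\sum_{i\in\mathcal{M}}I(X_i;Y_i)$, and the decoding constraint $1-H(\max_{i\in\mathcal{M}}p_{s,i})$). You have simply fleshed out the block-Markov construction and error analysis that the paper leaves implicit here but spells out for the pure decoding case in Appendix~\ref{appendix:coded}.
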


\begin{figure}[t]
\centering
\resizebox{\linewidth}{!}{ 
\begin{picture}(0,0)%
\includegraphics{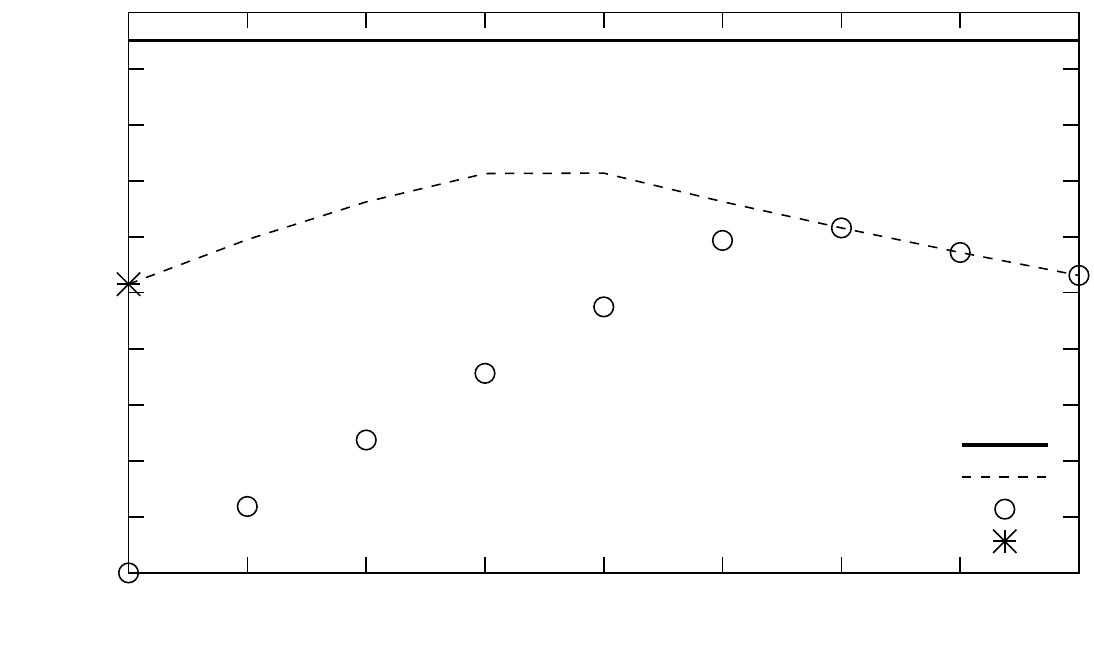}%
\end{picture}%
\setlength{\unitlength}{3947sp}%
\begingroup\makeatletter\ifx\SetFigFont\undefined%
\gdef\SetFigFont#1#2#3#4#5{%
  \reset@font\fontsize{#1}{#2pt}%
  \fontfamily{#3}\fontseries{#4}\fontshape{#5}%
  \selectfont}%
\fi\endgroup%
\begin{picture}(5247,3181)(1196,-3995)
\put(4094,-3935){\makebox(0,0)[b]{\smash{{\SetFigFont{10}{12.0}{\familydefault}{\mddefault}{updefault}Number of decoding relays, $|\mathcal{M}|$}}}}
\put(1738,-3623){\makebox(0,0)[rb]{\smash{{\SetFigFont{10}{12.0}{\familydefault}{\mddefault}{\updefault} 0}}}}
\put(1738,-3354){\makebox(0,0)[rb]{\smash{{\SetFigFont{10}{12.0}{\familydefault}{\mddefault}{\updefault} 0.1}}}}
\put(5738,-3471){\makebox(0,0)[rb]{\smash{{\SetFigFont{10}{12.0}{\familydefault}{\mddefault}{\updefault}$R_\text{coded,f}$ (all forwarding relays)}}}}
\put(1738,-3085){\makebox(0,0)[rb]{\smash{{\SetFigFont{10}{12.0}{\familydefault}{\mddefault}{\updefault} 0.2}}}}
\put(1738,-2816){\makebox(0,0)[rb]{\smash{{\SetFigFont{10}{12.0}{\familydefault}{\mddefault}{\updefault} 0.3}}}}
\put(1738,-2547){\makebox(0,0)[rb]{\smash{{\SetFigFont{10}{12.0}{\familydefault}{\mddefault}{\updefault} 0.4}}}}
\put(1738,-2278){\makebox(0,0)[rb]{\smash{{\SetFigFont{10}{12.0}{\familydefault}{\mddefault}{\updefault} 0.5}}}}
\put(1738,-2010){\makebox(0,0)[rb]{\smash{{\SetFigFont{10}{12.0}{\familydefault}{\mddefault}{\updefault} 0.6}}}}
\put(1738,-1741){\makebox(0,0)[rb]{\smash{{\SetFigFont{10}{12.0}{\familydefault}{\mddefault}{\updefault} 0.7}}}}
\put(1738,-1472){\makebox(0,0)[rb]{\smash{{\SetFigFont{10}{12.0}{\familydefault}{\mddefault}{\updefault} 0.8}}}}
\put(1738,-1203){\makebox(0,0)[rb]{\smash{{\SetFigFont{10}{12.0}{\familydefault}{\mddefault}{\updefault} 0.9}}}}
\put(1738,-934){\makebox(0,0)[rb]{\smash{{\SetFigFont{10}{12.0}{\familydefault}{\mddefault}{\updefault} 1}}}}
\put(1813,-3748){\makebox(0,0)[b]{\smash{{\SetFigFont{10}{12.0}{\familydefault}{\mddefault}{\updefault} 0}}}}
\put(2383,-3748){\makebox(0,0)[b]{\smash{{\SetFigFont{10}{12.0}{\familydefault}{\mddefault}{\updefault} 1}}}}
\put(5738,-3009){\makebox(0,0)[rb]{\smash{{\SetFigFont{10}{12.0}{\familydefault}{\mddefault}{\updefault}$R_\text{ub}$}}}}
\put(2954,-3748){\makebox(0,0)[b]{\smash{{\SetFigFont{10}{12.0}{\familydefault}{\mddefault}{\updefault} 2}}}}
\put(3524,-3748){\makebox(0,0)[b]{\smash{{\SetFigFont{10}{12.0}{\familydefault}{\mddefault}{\updefault} 3}}}}
\put(4094,-3748){\makebox(0,0)[b]{\smash{{\SetFigFont{10}{12.0}{\familydefault}{\mddefault}{\updefault} 4}}}}
\put(5738,-3163){\makebox(0,0)[rb]{\smash{{\SetFigFont{10}{12.0}{\familydefault}{\mddefault}{\updefault}$R_\text{coded,h}$}}}}
\put(5738,-3317){\makebox(0,0)[rb]{\smash{{\SetFigFont{10}{12.0}{\familydefault}{\mddefault}{\updefault}$R_\text{coded,d}$ (only decoding relays, the rest unused)}}}}
\put(4664,-3748){\makebox(0,0)[b]{\smash{{\SetFigFont{10}{12.0}{\familydefault}{\mddefault}{\updefault} 5}}}}
\put(5235,-3748){\makebox(0,0)[b]{\smash{{\SetFigFont{10}{12.0}{\familydefault}{\mddefault}{\updefault} 6}}}}
\put(5805,-3748){\makebox(0,0)[b]{\smash{{\SetFigFont{10}{12.0}{\familydefault}{\mddefault}{\updefault} 7}}}}
\put(6375,-3748){\makebox(0,0)[b]{\smash{{\SetFigFont{10}{12.0}{\familydefault}{\mddefault}{\updefault} 8}}}}
\put(1331,-2217){\rotatebox{90.0}{\makebox(0,0)[b]{\smash{{\SetFigFont{10}{12.0}{\familydefault}{\mddefault}{\updefault}$R$ [bits/channel use]}}}}}
\end{picture}%
}
\caption{Comparing the hybrid coding scheme to the other coding schemes and the cut-set upper bound, $K=8$, $p_{s,i}= \frac{0.1i}{8}$, $p_{i,d}=0.3$, for $i \in \{1,2,\dotsc,8\}$}
\label{fig:hybrid}
\end{figure}

\begin{remark}
Setting all relays as forwarding relays, i.e., $\mathcal{M}=\varnothing$ and $\mathcal{F} = \{1,2,\dotsc,K\}$, we recover Theorem~\ref{thm:achievability-simple}. Setting only relays in $\mathcal{M} \subseteq \{1,2,\dotsc,K\}$ as decoding relays and ignoring the other relays (i.e., not using them), i.e., $\mathcal{F} = \varnothing$, we recover Theorem~\ref{theorem:coded-relay}.
\end{remark}


Determining the optimal set of $\mathcal{M}$ is a hard combinatorial problem. A rule of thumb is to include relays with low $p_{s,i}$ in $\mathcal{M}$, and relays with high $p_{s,i}$ in $\mathcal{F}$. The reason is that a high $p_{s,i}$ (a noisy channel from the source to relay $i$) constrains the overall rate when $i$  is included in $\mathcal{M}$ (i.e., setting relay $i$ to decode).

\subsubsection{Numerical Example}
We now show that the hybrid scheme is useful when the channels from the source to the relays have different noise levels. As an example, we consider the eight-relay BSPR network with $p_{s,i}=0.1i/8$, and $p_{i,d} = 0.3$, for $i \in \{1,2,\dotsc,8\}$. We compare $R_\textnormal{coded,f}$, $R_\textnormal{coded,d}$, and $R_\textnormal{coded,h}$ using different numbers of decoding relays. The results are depicted in Fig.~\ref{fig:hybrid}. Setting all relays to perform forwarding, we achieve $R_\textnormal{coded,f} = 0.52$ (i.e., the hybrid scheme when $|\mathcal{M}|=0$). With only decoding relays, the rate is maximized at $R_\textnormal{coded,d} = 0.62$ by using six relays, i.e., when six of the eight relays perform decoding and re-encoding and the rest of the relays are unused (note that using relay 8 as a decoding relay will constrain the rate to be $R_\textnormal{coded,d} \leq 1 - H(0.1) = 0.531$). Using a combination of forwarding and decoding relays, we can achieve a significantly higher rate of $R_\textnormal{coded,h} = 0.71$ by using four decoding relays and four forwarding relays. 

\subsubsection{Other Combinations}

In general, other rates can also be achieved using different coding schemes in which the relays perform other functions. Some examples are as follows:
\begin{itemize}
\item The relays decode part (possibly different parts for different relays) of the source messages, re-encode, and forward them to the destination (demonstrated on AWGN parallel-relay networks in~\cite{ghabeliaref08}).
\item The relays compress the received signals, and forward them to the destination (demonstrated on AWGN parallel-relay networks in~\cite{kochmankhinaerezzamir08}).
\item A combination of different schemes at different relays (demonstrated on AWGN parallel-relay networks in~\cite{rezaeigharan09}).
\end{itemize}

\subsubsection{An upper bound to achievable rates with decoding relays}
For any coding scheme, if any of the relays is required to reliably decode the source messages,\footnote{This condition is not necessary for reliable communication because reliable decoding at the destination does not necessarily require reliable decoding at any relay.} the achievable rate is necessarily constrained by the following:

\begin{theorem}\label{lemma:upper-bound-full-decoding}
For any $(2^{nR},n)$ code, if any relay $i \in \{1,\dotsc,K\}$ is to reliably decode the source message, then the rates achievable by this code are upper bounded as
\begin{subequations}
\begin{align}
R &\leq \max_{p(u)} I(U;V_i)\label{eq:upper-bound-on-relay}\\
&= 1 - H(p_{s,i}).\label{eq:upper-bound-on-relay-binary-symmetric}
\end{align}
\end{subequations}
\end{theorem}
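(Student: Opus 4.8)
The plan is to show this as a straightforward consequence of Fano's inequality applied at relay $i$. Suppose relay $i$ can reliably decode the source message $W$ from its received sequence $\boldsymbol{V}_i$; that is, there is a decoding map $\tilde{g}_i$ with $\Pr\{\tilde{g}_i(\boldsymbol{V}_i) \neq W\} \to 0$ as $n \to \infty$. First I would note that the source's transmitted sequence $\boldsymbol{U} = f_S(W)$ is a deterministic function of $W$, and that the channel from $U$ to $V_i$ is memoryless (by the channel model assumptions) with $V_i = U \oplus Z_i$ and $\Pr\{Z_i = 1\} = p_{s,i}$, independent across time and of $W$. Hence the Markov chain $W \to \boldsymbol{U} \to \boldsymbol{V}_i$ holds.

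The key steps are the standard converse manipulation. Since $W$ is uniform on $\mathcal{W}$ with $|\mathcal{W}| = 2^{nR}$, we have $nR = H(W)$. By Fano's inequality, $H(W \mid \boldsymbol{V}_i) \leq 1 + P_{\textnormal{e},i} \cdot nR \triangleq n\epsilon_n$, where $P_{\textnormal{e},i} \to 0$ and hence $\epsilon_n \to 0$. Therefore
\begin{equation}
nR = H(W) = I(W;\boldsymbol{V}_i) + H(W\mid\boldsymbol{V}_i) \leq I(W;\boldsymbol{V}_i) + n\epsilon_n.
\end{equation}
Next, by the data-processing inequality applied to $W \to \boldsymbol{U} \to \boldsymbol{V}_i$, we have $I(W;\boldsymbol{V}_i) \leq I(\boldsymbol{U};\boldsymbol{V}_i)$, and by the memorylessness of the source-to-relay channel together with the single-letterization bound $I(\boldsymbol{U};\boldsymbol{V}_i) \leq \sum_{t=1}^n I(U[t];V_i[t]) \leq n \max_{p(u)} I(U;V_i)$. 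Combining, $R \leq \max_{p(u)} I(U;V_i) + \epsilon_n$, and letting $n \to \infty$ gives \eqref{eq:upper-bound-on-relay}. Finally, \eqref{eq:upper-bound-on-relay-binary-symmetric} follows because for a binary symmetric channel with crossover probability $p_{s,i}$, the mutual information $I(U;V_i)$ is maximized by the uniform input $p'(u)$ (this is the symmetry argument already invoked via \cite[Thm.\ 4.5.2]{gallager68} in the proof of Lemma~\ref{lemma:symmetric}), yielding $\max_{p(u)} I(U;V_i) = 1 - H(p_{s,i})$.

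There is essentially no hard part here — this is the textbook point-to-point converse, with the only mild subtlety being that one must observe $\boldsymbol{U}$ depends on $W$ alone (not on any feedback), so that the Markov chain $W \to \boldsymbol{U} \to \boldsymbol{V}_i$ is valid; this is immediate from the definition of the encoding function $f_S$. If I wanted to be careful about what "reliably decode" means for the quantifiers, I would phrase it as: for every $\delta > 0$ and all sufficiently large $n$ there is a code with $P_{\textnormal{e},i} \leq \delta$, which is exactly what is needed to drive $\epsilon_n \to 0$ along the relevant subsequence.
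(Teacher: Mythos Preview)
Your proof is correct and is essentially the same as the paper's, just more explicit: the paper simply invokes the point-to-point converse \cite[Thm.~7.7.1]{coverthomas06} for the channel $U\to V_i$ and then notes that this channel is a BSC, whereas you have unpacked that converse via Fano's inequality, the data-processing inequality, and single-letterization. There is no substantive difference in approach.
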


\begin{proof}
Since we need to ensure that the probability of relay $i$ wrongly decoding the source message can be made arbitrarily small, the rate $R$ must be upper bounded by the channel capacity from the source to relay $i$. From~\cite[Thm.\ 7.7.1]{coverthomas06}, we have \eqref{eq:upper-bound-on-relay}. Since the channel from $U$ to $V_i$ is a binary symmetric channel, we have \eqref{eq:upper-bound-on-relay-binary-symmetric}.
\end{proof}


\section{Capacity vs. Number of Relays}

In this section, we consider the symmetrical BSPR network. By varying the number of relays, $K$, we analyze the performance of the three schemes considered in this paper: (i) coded transmission with forwarding relays, (ii) uncoded transmission with forwarding relays, and (iii) coded transmission with decoding relays. Note that the hybrid scheme is not used here as we will either use all relays as decoding relays or as forwarding relays (see Remark~\ref{remark:no-hybrid} below).  Recall that $p \triangleq  p_{s}(1-p_{d}) + (1-p_{s})p_{d}$. 

\begin{remark} \label{remark:no-hybrid}
Note that the hybrid scheme consisting of some combination of decoding and forwarding relays is not useful in the symmetrical BSPR network. From Theorem~\ref{theorem:hybrid}, $R_\textnormal{coded,h} = \min \left\{ 1 - H(p_s), I(U;Y_{\mathcal{F}}) + |\mathcal{M}| \Big(1 - H(p_d)\Big) \right\}$ if $\mathcal{M} \neq \varnothing$, and $R_\textnormal{coded,h} = I(U;Y_{\{1,2,\dotsc,K\}})$ if $\mathcal{M}=\varnothing$, where $\mathcal{F}$ is the set of forwarding relays and $\mathcal{M}$ is the set of decoding relays. From the data-processing inequality, we know that $I(U;Y_\mathcal{F}) \leq I(X_\mathcal{F};Y_\mathcal{F}) =|\mathcal{F}|(1-H(p_d))$. So, if $\mathcal{M} \neq \varnothing$, we will choose $\mathcal{M}=\{1,2,\dotsc,K\}$. This means we either use all relays as forwarding relays or as decoding relays.
\end{remark}

\subsection{Coded Transmission with Forwarding Relays}

We first show the following asymptotic result as $K$ tends to infinity using coded transmission with forwarding relays.
\begin{theorem} \label{theorem:coded-forward-asymptotic}
Consider the symmetrical BSPR network. Coded transmissions with forwarding relays achieves any rate $R < R_\textnormal{coded,f}$ as defined in \eqref{eq:achievable-rate-coded-forward}, where
\begin{equation}
R_\textnormal{coded,f} \rightarrow C,\quad \textnormal{ as } \quad K \rightarrow \infty.
\end{equation}
\end{theorem}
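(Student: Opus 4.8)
The plan is to show that the achievable rate $R_\textnormal{coded,f}$ for coded transmission with forwarding relays, as given by the equivalent point-to-point channel $U \to \bar Y$ with i.i.d.\ crossover probabilities $p$, converges to the capacity upper bound as $K \to \infty$. Since Theorem~\ref{theorem:upper-bounds-capacity} gives $C \le 1$ and since the uniform input is optimal for the symmetrical channel $U \to \bar Y$ (as established in the proof of Theorem~\ref{thm:achievability-simple}), it suffices to show $I(U;\bar Y) \to 1$ as $K \to \infty$, with $U$ uniform binary.

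The key step is the following chain: write $I(U;\bar Y) = H(U) - H(U \mid \bar Y) = 1 - H(U \mid \bar Y)$, so I must show $H(U \mid \bar Y) \to 0$. The destination observes $K$ i.i.d.\ noisy copies of the bit $U$ through binary-symmetric channels each with crossover probability $p < 1/2$. Intuitively, with $K$ independent looks at $U$, a maximum-likelihood (majority) estimate $\tilde U$ of $U$ has error probability $P_K \to 0$ exponentially in $K$ (by a Chernoff/Hoeffding bound, $P_K \le e^{-2K(1/2-p)^2}$ or similar). Then Fano's inequality gives $H(U \mid \bar Y) \le H(P_K) + P_K \log 2 = H(P_K) + P_K \to 0$. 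Combining, $R_\textnormal{coded,f} = I(U;\bar Y) = 1 - H(U\mid\bar Y) \to 1 \ge C$, and since $R_\textnormal{coded,f} \le C$ always (it is an achievable rate), we conclude $R_\textnormal{coded,f} \to C$.

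Alternatively, and perhaps more cleanly, one can invoke Corollary~\ref{theorem:general-coded-forwarding-rate} (or Corollary~\ref{theorem:simple-evaluation} directly) in the symmetrical setting: taking $\mathcal{M} = \{1,\dots,K\}$ so that $M = K$ and $p_{\textnormal{MAX}(\mathcal{M})} = p$, the condition of Remark~\ref{remark:m} reads $K(0.5 - p) \to \infty$, which holds automatically since $p < 1/2$ is fixed. Remark~\ref{remark:m-proof} (referenced in Remark~\ref{remark:m}) then asserts that coded transmission with forwarding relays approaches capacity. So the proof can either be a direct Fano-plus-Chernoff argument or a short deduction from the already-stated general result specialized to the symmetrical network.

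The main obstacle is making the error-probability bound for the majority decoder rigorous and clean: one needs $\Pr\{\text{more than } K/2 \text{ of the } N_i \text{ equal } 1\} \to 0$, which is a routine large-deviations estimate for a $\mathrm{Binomial}(K,p)$ random variable with $p<1/2$, handled by Hoeffding's inequality. A secondary (cosmetic) point is being careful that the ML estimate $\tilde U$ here is a function of $\bar Y$ only, so that Fano applies to $H(U\mid\bar Y)$; ties (when $K$ is even) can be broken arbitrarily and only worsen $P_K$ by a harmless factor. Everything else — the reduction to $I(U;\bar Y)$, the uniformity of the optimal input, and the sandwich $R_\textnormal{coded,f} \le C \le 1$ — is immediate from results already in the excerpt.
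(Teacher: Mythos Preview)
Your primary argument is correct and takes a genuinely different route from the paper. The paper expands $I(U;\bar Y)=H(\bar Y)-H(\bar Y\mid U)$ and lower-bounds $H(\bar Y)$ via an $\eta$-strongly-typical-set calculation (restricting to sequences with exactly $K'p$ ones, computing $p^*(\bar y)$ on those, and pushing the algebra through to $-\log\alpha(K',p)-K'H(p)\to 1$), even temporarily assuming $p$ rational so that $K'p$ is an integer. You instead expand $I(U;\bar Y)=H(U)-H(U\mid\bar Y)=1-H(U\mid\bar Y)$ and upper-bound $H(U\mid\bar Y)$ by Fano applied to the majority estimator, with the Hoeffding bound $P_K\le e^{-2K(1/2-p)^2}$ supplying the vanishing error. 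Your route is shorter, avoids the typical-set machinery and the rationality detour, and in fact reuses the very same Hoeffding estimate the paper already proves in Appendix~\ref{appendix:uncoded-symmetrical}. One cosmetic slip: for binary $U$ Fano gives $H(U\mid\bar Y)\le h(P_K)$ with no additive $P_K\log 2$ term (since $\log(|\mathcal U|-1)=0$), but this is harmless.

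One caution about your ``alternative'' paragraph: invoking Remark~\ref{remark:m-proof} here would be circular. That remark explicitly \emph{uses} Theorem~\ref{theorem:coded-forward-asymptotic} (``the aforementioned result'') to establish the claim of Remark~\ref{remark:m} for general networks, not the other way around. So keep the Fano--Hoeffding argument as your actual proof and drop the appeal to Remark~\ref{remark:m-proof}.
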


\begin{proof}
See Appendix~\ref{appendix:coded-forward-asymptotic}.
\end{proof}

\begin{remark} \label{remark:m-proof}
We can use the aforementioned result to prove our claim in Remark~\ref{remark:m}. Even for a general BSPR network (which can be non-symmetrical), coded transmission with forwarding relays approaches the capacity asymptotically as the network size increases if the condition specified in Remark~\ref{remark:m} is satisfied, i.e., $|\mathcal{M}|(0.5-p_{\textnormal{MAX}(\mathcal{M})}) \rightarrow \infty$ as $K \rightarrow \infty$, where $\mathcal{M} \subseteq \{1,2,\dotsc,K\}$. From \eqref{eq:coded-general}, we note that $R_\textnormal{coded,f} \geq R'$ where $R'$ equals the rate achievable by the same coding strategy in the symmetrical BSPR network with $K$ replaced by $|\mathcal{M}|$, and $p$ by $p_{\textnormal{MAX}(\mathcal{M})}$ (c.f. \eqref{eq:simple-evaluation}). From \eqref{eq:coded-asymptotic}, we know that $R' \rightarrow C$  as $K \rightarrow \infty$. It follows that $R_\textnormal{coded,f} \rightarrow C$ as $K \rightarrow \infty$ for the general BSPR network.
\end{remark}

\begin{figure}[t]
\centering
\resizebox{\linewidth}{!}{ 
\begin{picture}(0,0)%
\includegraphics{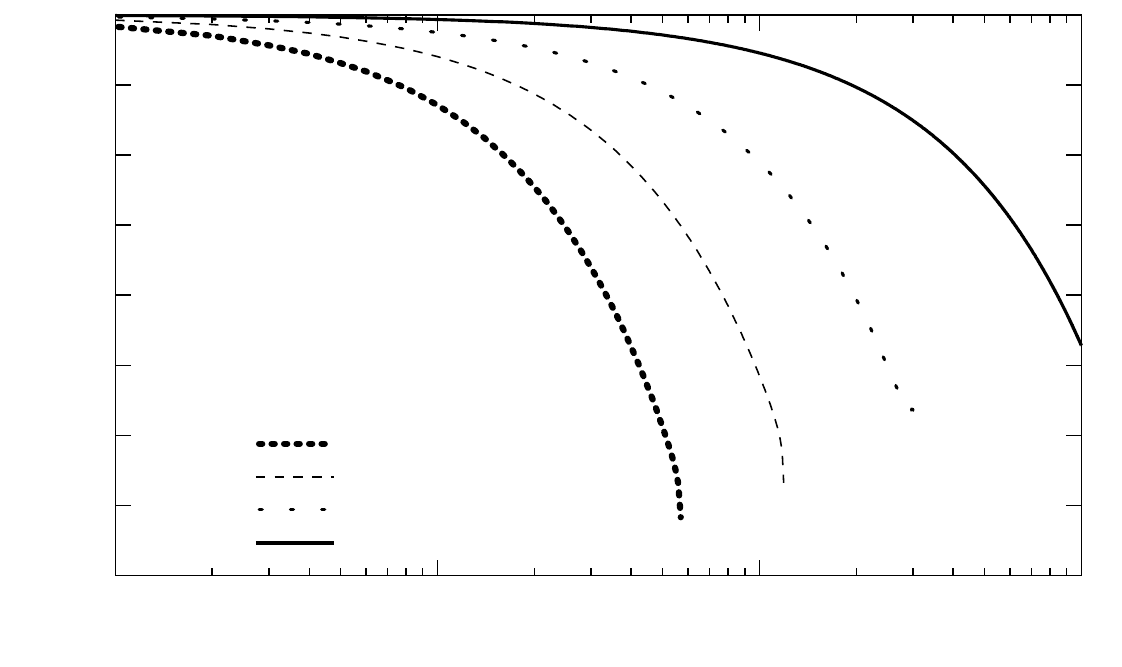}%
\end{picture}%
\setlength{\unitlength}{3947sp}%
\begingroup\makeatletter\ifx\SetFigFont\undefined%
\gdef\SetFigFont#1#2#3#4#5{%
  \reset@font\fontsize{#1}{#2pt}%
  \fontfamily{#3}\fontseries{#4}\fontshape{#5}%
  \selectfont}%
\fi\endgroup%
\begin{picture}(5375,3185)(1184,-3999)
\put(1663,-1606){\makebox(0,0)[rb]{\smash{{\SetFigFont{10}{12.0}{\familydefault}{\mddefault}{\updefault}-4}}}}
\put(1663,-1270){\makebox(0,0)[rb]{\smash{{\SetFigFont{10}{12.0}{\familydefault}{\mddefault}{\updefault}-2}}}}
\put(1663,-934){\makebox(0,0)[rb]{\smash{{\SetFigFont{10}{12.0}{\familydefault}{\mddefault}{\updefault} 0}}}}
\put(2338,-3150){\makebox(0,0)[rb]{\smash{{\SetFigFont{10}{12.0}{\familydefault}{\mddefault}{\updefault}$p=0.2$}}}}
\put(1738,-3748){\makebox(0,0)[b]{\smash{{\SetFigFont{10}{12.0}{\familydefault}{\mddefault}{\updefault} 1}}}}
\put(3284,-3748){\makebox(0,0)[b]{\smash{{\SetFigFont{10}{12.0}{\familydefault}{\mddefault}{\updefault} 10}}}}
\put(2338,-3309){\makebox(0,0)[rb]{\smash{{\SetFigFont{10}{12.0}{\familydefault}{\mddefault}{\updefault}$p=0.3$}}}}
\put(4829,-3748){\makebox(0,0)[b]{\smash{{\SetFigFont{10}{12.0}{\familydefault}{\mddefault}{\updefault} 100}}}}
\put(6375,-3748){\makebox(0,0)[b]{\smash{{\SetFigFont{10}{12.0}{\familydefault}{\mddefault}{\updefault} 1000}}}}
\put(1331,-2217){\rotatebox{90.0}{\makebox(0,0)[b]{\smash{{\SetFigFont{10}{12.0}{\familydefault}{\mddefault}{\updefault}$\log_{10}(\zeta)$}}}}}
\put(2338,-3468){\makebox(0,0)[rb]{\smash{{\SetFigFont{10}{12.0}{\familydefault}{\mddefault}{\updefault}$p=0.4$}}}}
\put(4056,-3935){\makebox(0,0)[b]{\smash{{\SetFigFont{10}{12.0}{\familydefault}{\mddefault}{\updefault}$K(p,\zeta)$}}}}
\put(1663,-3623){\makebox(0,0)[rb]{\smash{{\SetFigFont{10}{12.0}{\familydefault}{\mddefault}{\updefault}-16}}}}
\put(1663,-3287){\makebox(0,0)[rb]{\smash{{\SetFigFont{10}{12.0}{\familydefault}{\mddefault}{\updefault}-14}}}}
\put(1663,-2951){\makebox(0,0)[rb]{\smash{{\SetFigFont{10}{12.0}{\familydefault}{\mddefault}{\updefault}-12}}}}
\put(1663,-2615){\makebox(0,0)[rb]{\smash{{\SetFigFont{10}{12.0}{\familydefault}{\mddefault}{\updefault}-10}}}}
\put(1663,-2278){\makebox(0,0)[rb]{\smash{{\SetFigFont{10}{12.0}{\familydefault}{\mddefault}{\updefault}-8}}}}
\put(1663,-1942){\makebox(0,0)[rb]{\smash{{\SetFigFont{10}{12.0}{\familydefault}{\mddefault}{\updefault}-6}}}}
\put(2338,-2991){\makebox(0,0)[rb]{\smash{{\SetFigFont{10}{12.0}{\familydefault}{\mddefault}{\updefault}$p=0.1$}}}}
\end{picture}%
}
\caption{Gap from the 1-bit capacity upper bound, $\zeta$, vs. $K(p,\zeta)$ for different $p$}
\label{fig:rate-vs-relay}
\end{figure}

Theorem~\ref{theorem:coded-forward-asymptotic} implies that the capacity rounded to some number of significant digits is achievable with coded transmission and  $K(p,\zeta)$ or more forwarding relays, for some positive integer $K(p,\zeta)$ given in the following corollary.
\begin{corollary} \label{corollary:forwarding}
Consider the symmetrical BSPR network. For any $\zeta > 0$, if $K>K(p,\zeta)$, where $K(p,\zeta)$ is the smallest integer satisfying
\begin{multline}
1 + K(p,\zeta) p \log p + K(p,\zeta) q\log q\\ - \sum_{l=0}^{K(p,\zeta)-1} \Bigg[ \binom{K(p,\zeta)-1}{l} (q^lp^{K(p,\zeta)-l} + q^{K(p,\zeta)-l}p^l) \\ \log (q^lp^{K(p,\zeta)-l} + q^{K(p,\zeta)-l}p^l) \Bigg] \geq 1 - \zeta, \label{eq:q}
\end{multline}
then coded transmission with forwarding relays achieves rates within $\zeta$ bits of the capacity.
\end{corollary}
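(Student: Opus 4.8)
The plan is to read Corollary~\ref{corollary:forwarding} as the quantitative, finite-$K$ version of the asymptotic statement in Theorem~\ref{theorem:coded-forward-asymptotic}. The two ingredients I need are (i) that the achievable rate $R_\textnormal{coded,f}$, viewed as a function of $K$ through \eqref{eq:simple-evaluation} --- write it $R_\textnormal{coded,f}(K)$ --- is nondecreasing in $K$, and (ii) that $R_\textnormal{coded,f}(K)$ climbs to the $1$-bit upper bound of Theorem~\ref{theorem:upper-bounds-capacity}. Note first that \eqref{eq:q} is \emph{literally} the inequality $R_\textnormal{coded,f}\big(K(p,\zeta)\big) \ge 1-\zeta$ with \eqref{eq:simple-evaluation} substituted; so the whole task reduces to (a) showing this inequality holds for some integer --- making $K(p,\zeta)$ well defined --- and (b) propagating it upward to every $K > K(p,\zeta)$.

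For (i), I would use the information-theoretic form $R_\textnormal{coded,f}(K) = \max_{p(u)} I(U;Y_1,\dotsc,Y_K)$, the maximizing input being uniform by the symmetry argument in the proof of Theorem~\ref{thm:achievability-simple} together with Lemma~\ref{lemma:symmetric}; this maximum equals the right-hand side of \eqref{eq:simple-evaluation}. For any fixed input distribution, the chain rule gives $I(U;Y_1,\dotsc,Y_K) = I(U;Y_1,\dotsc,Y_{K-1}) + I(U;Y_K \mid Y_1,\dotsc,Y_{K-1}) \ge I(U;Y_1,\dotsc,Y_{K-1})$ by nonnegativity of conditional mutual information; maximizing over $p(u)$ on both sides then yields $R_\textnormal{coded,f}(K) \ge R_\textnormal{coded,f}(K-1)$. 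Equivalently, one can quote Corollary~\ref{theorem:general-coded-forwarding-rate} with $\mathcal{M}$ an arbitrary $(K-1)$-element subset of $\{1,\dotsc,K\}$: discarding one relay can only lower the rate.

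For (ii), I would invoke the proof of Theorem~\ref{theorem:coded-forward-asymptotic}, which establishes that $R_\textnormal{coded,f}(K) \to 1$ as $K \to \infty$ (the achievable rate meets the $1$-bit upper bound of Theorem~\ref{theorem:upper-bounds-capacity}, forcing $C \to 1$ as well). Hence, for the given $\zeta>0$, there is a finite integer $K_0$ with $R_\textnormal{coded,f}(K_0) \ge 1-\zeta$, and $K(p,\zeta)$ --- the least such integer --- is well defined. Then for every $K > K(p,\zeta)$, the monotonicity from (i) gives $R_\textnormal{coded,f}(K) \ge R_\textnormal{coded,f}\big(K(p,\zeta)\big) \ge 1-\zeta$; combining with $C \le 1$ (Theorem~\ref{theorem:upper-bounds-capacity}) gives $C - R_\textnormal{coded,f}(K) \le \zeta$, and since coded transmission with forwarding relays achieves every rate $R < R_\textnormal{coded,f}(K)$ (Theorem~\ref{thm:achievability-simple}), it achieves rates within $\zeta$ bits of $C$, as claimed.

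I do not expect a genuine obstacle here: the real analytic work --- showing that the series in \eqref{eq:simple-evaluation} tends to $1$ --- is carried by the proof of Theorem~\ref{theorem:coded-forward-asymptotic} in the appendix, and the only extra fact, monotonicity of $R_\textnormal{coded,f}$ in $K$, is a one-line consequence of nonnegativity of conditional mutual information once the maximizing input has been pinned down as uniform for every $K$. The remainder is bookkeeping with the definition of $K(p,\zeta)$.
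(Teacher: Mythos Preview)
Your proposal is correct and follows essentially the same route as the paper: identify the left-hand side of \eqref{eq:q} as $R_\textnormal{coded,f}$ via Corollary~\ref{theorem:simple-evaluation}, use $I(U;Y_1,\dotsc,Y_{K+1}) \ge I(U;Y_1,\dotsc,Y_K)$ for monotonicity, invoke the asymptotic result \eqref{eq:proof-asymptotic} from the proof of Theorem~\ref{theorem:coded-forward-asymptotic} to guarantee existence of $K(p,\zeta)$, and conclude from $C \le 1$ (Theorem~\ref{theorem:upper-bounds-capacity}) that $R_\textnormal{coded,f} \ge C - \zeta$ for all $K \ge K(p,\zeta)$. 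Your treatment of monotonicity is slightly more explicit than the paper's one-line assertion, but the argument is the same.
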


\begin{proof}
Because $I(U;Y_1,Y_2,\dotsc,Y_{K+1}) \geq I(U;Y_1,Y_2,$ $\dotsc,Y_K)$, $R_\textnormal{coded,f}$ is a non-decreasing function of $K$. Note that the LHS of \eqref{eq:q} is $R_\textnormal{coded,f}$ (see Corollary~\ref{theorem:simple-evaluation}). From \eqref{eq:proof-asymptotic} we know that there must exist a positive integer $K(p,\zeta)$ such that \eqref{eq:q} is true. Since $C \leq 1$ from Theorem~\ref{thm:achievability-simple}, we have $R_\textnormal{coded,f} \geq C -\zeta$ for $K=K(p,\zeta)$ as well as for all $K > K(p,\zeta)$.
\end{proof}

As $K(p,\zeta)$ in Corollary~\ref{corollary:forwarding} is not available in closed form, we numerically evaluate $K(p,\zeta)$ for $p \in \{0.1, 0.2, 0.3, 0.4\}$ and for varying $\zeta$. The results are shown in Fig.~\ref{fig:rate-vs-relay}. At $p=0.1$, we only need 16 relays to achieve within 0.0001 bits of the capacity. At $p=0.4$, we need 387 relays to achieve within 0.0001 bits of the capacity.

The proof of $R_\textnormal{coded,f}$ relies on the channel coding theorem which requires an infinitely long codelength $n$. This means we have an infinitely long \emph{delay} between the time a message is transmitted at the source and the time the message is decoded at the destination.



\subsection{Uncoded Transmission with Forwarding Relays}

In Theorem~\ref{thm:uncoded}, we have shown that when transmitting at 1~bit/network use using uncoded transmission with forwarding relays, an infinitely large number of relays are required to achieve an arbitrarily small error probability. However, if a larger error probability is acceptable, the number of relays required is smaller.
The proof of Theorem~\ref{thm:uncoded} provides an inequality explicitly upper bounding the error probability by a function of the number of relays and the crossover probability [see \eqref{eq:uncoded-error}  in Appendix~\ref{appendix:uncoded-symmetrical}].
 Table~\ref{tab:uncoded-number-of-required-relays} shows the number of relays that are sufficient to achieve $P_\textnormal{e}^\textnormal{up}$ for different channel cross-over probabilities $p \in \{0.1,0.2,0.3,0.4\}$, where $P_\textnormal{e}^\textnormal{up}$ is an upper bound to the probability of decoding error at the destination, $P_\textnormal{e}$.


\begin{table}
\renewcommand{\arraystretch}{1.3} 
\caption{Number of relays sufficient to achieve certain error probabilities using uncoded transmission}
\label{tab:uncoded-number-of-required-relays}
\centering
\begin{small}
\begin{tabular}{ c | c c c c }
\hline
& $p=0.1$ & $p=0.2$ & $p=0.3$ & $p=0.4$ \\
\hline
$P_\textnormal{e}^\textnormal{up}=10^{-5}$ & $36$ & $64$ & $144$ & $576$ \\
$P_\textnormal{e}^\textnormal{up}=10^{-10}$ & $72$ & $128$ & $289$ & $1152$ \\
$P_\textnormal{e}^\textnormal{up}=10^{-50}$ & $360$ & $640$ & $1440$ & $5757$ \\
\hline
\end{tabular}
\end{small}
\end{table}

\begin{table*}[t]
\renewcommand{\arraystretch}{1.3} 
\caption{Conditions under which various coding schemes achieve the capacity}
\label{tab:comparison-schemes}
\centering
\begin{small}
\begin{tabular}{ c || c | c }
\hline
{\bf Coding schemes} & {\bf Delay} & {\bf Network size ($K$)} \\
\hline
Coded transmission with forwarding relays & Infinitely long & $K \geq K(p,\zeta)$\\
Uncoded transmission with forwarding relays & Two network uses & Infinitely large\\
Coded transmission with decoding relays & Infinitely long & $K \leq K'(p_s,p_d)$\\
\hline
\end{tabular}\\
\end{small}
\vspace*{4pt}
\vspace*{4pt}
\end{table*}

\begin{figure*}[t]
\centering
\subfloat[$p_s=p_d=0.1$]
{\resizebox{8.3cm}{!}{ 
\begin{picture}(0,0)%
\includegraphics{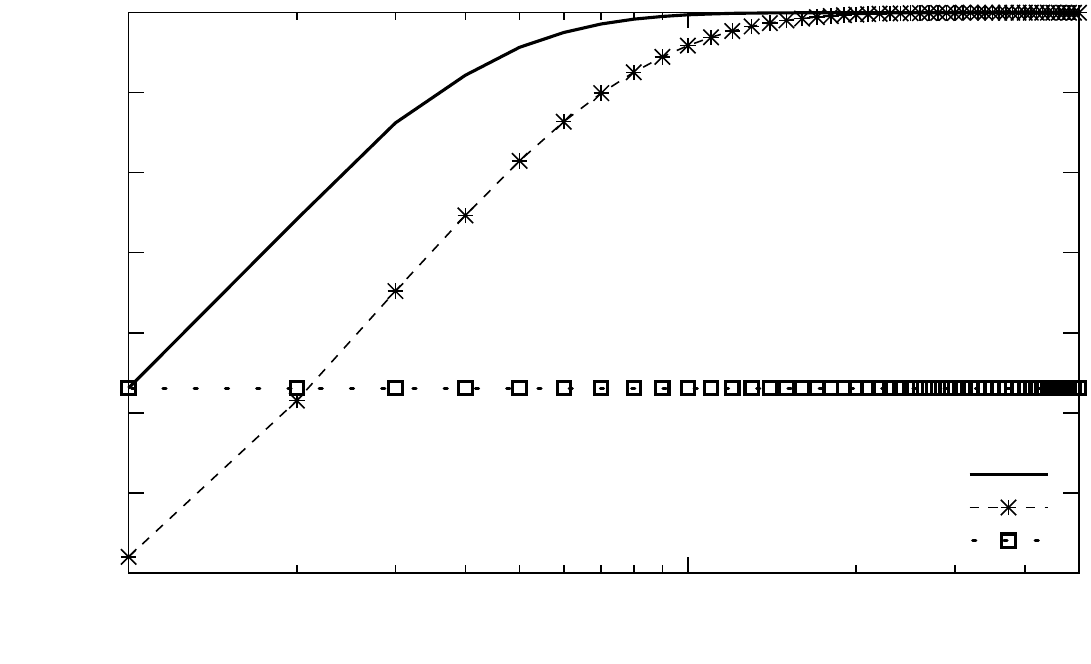}%
\end{picture}%
\setlength{\unitlength}{3947sp}%
\begingroup\makeatletter\ifx\SetFigFont\undefined%
\gdef\SetFigFont#1#2#3#4#5{%
  \reset@font\fontsize{#1}{#2pt}%
  \fontfamily{#3}\fontseries{#4}\fontshape{#5}%
  \selectfont}%
\fi\endgroup%
\begin{picture}(5233,3181)(1196,-3995)
\put(5775,-3150){\makebox(0,0)[rb]{\smash{{\SetFigFont{10}{12.0}{\familydefault}{\mddefault}{\updefault}cut-set upper bound}}}}
\put(1738,-2855){\makebox(0,0)[rb]{\smash{{\SetFigFont{10}{12.0}{\familydefault}{\mddefault}{\updefault} 0.5}}}}
\put(1738,-2471){\makebox(0,0)[rb]{\smash{{\SetFigFont{10}{12.0}{\familydefault}{\mddefault}{\updefault} 0.6}}}}
\put(1738,-2086){\makebox(0,0)[rb]{\smash{{\SetFigFont{10}{12.0}{\familydefault}{\mddefault}{\updefault} 0.7}}}}
\put(1738,-1702){\makebox(0,0)[rb]{\smash{{\SetFigFont{10}{12.0}{\familydefault}{\mddefault}{\updefault} 0.8}}}}
\put(1738,-1318){\makebox(0,0)[rb]{\smash{{\SetFigFont{10}{12.0}{\familydefault}{\mddefault}{\updefault} 0.9}}}}
\put(1738,-934){\makebox(0,0)[rb]{\smash{{\SetFigFont{10}{12.0}{\familydefault}{\mddefault}{\updefault} 1}}}}
\put(1813,-3748){\makebox(0,0)[b]{\smash{{\SetFigFont{10}{12.0}{\familydefault}{\mddefault}{\updefault} 1}}}}
\put(4498,-3748){\makebox(0,0)[b]{\smash{{\SetFigFont{10}{12.0}{\familydefault}{\mddefault}{\updefault} 10}}}}
\put(1331,-2217){\rotatebox{90.0}{\makebox(0,0)[b]{\smash{{\SetFigFont{10}{12.0}{\familydefault}{\mddefault}{\updefault}$R$ [bits/channel use]}}}}}
\put(4094,-3935){\makebox(0,0)[b]{\smash{{\SetFigFont{10}{12.0}{\familydefault}{\mddefault}{\updefault}Number of relays, $K$}}}}
\put(5775,-3309){\makebox(0,0)[rb]{\smash{{\SetFigFont{10}{12.0}{\familydefault}{\mddefault}{\updefault}$R_\text{coded,f}$ (forwarding relays)}}}}
\put(1738,-3623){\makebox(0,0)[rb]{\smash{{\SetFigFont{10}{12.0}{\familydefault}{\mddefault}{\updefault} 0.3}}}}
\put(1738,-3239){\makebox(0,0)[rb]{\smash{{\SetFigFont{10}{12.0}{\familydefault}{\mddefault}{\updefault} 0.4}}}}
\put(5775,-3468){\makebox(0,0)[rb]{\smash{{\SetFigFont{10}{12.0}{\familydefault}{\mddefault}{\updefault}$R_\text{coded,d}$ (decoding relays)}}}}
\end{picture}%
} 
\label{fig:forwarding-vs-coding}}
\hspace{0.5cm}
\subfloat[$p_s=0.05$ and $p_d=0.3$]
{\hspace{1ex}\resizebox{8.5cm}{!}{ 
\begin{picture}(0,0)%
\includegraphics{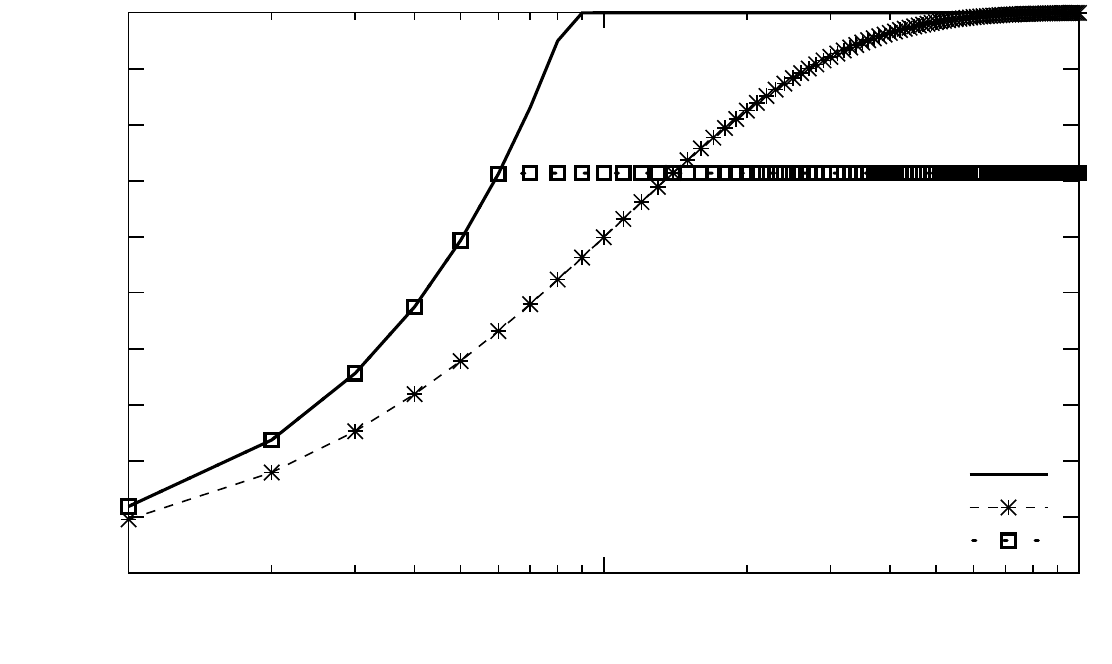}%
\end{picture}%
\setlength{\unitlength}{3947sp}%
\begingroup\makeatletter\ifx\SetFigFont\undefined%
\gdef\SetFigFont#1#2#3#4#5{%
  \reset@font\fontsize{#1}{#2pt}%
  \fontfamily{#3}\fontseries{#4}\fontshape{#5}%
  \selectfont}%
\fi\endgroup%
\begin{picture}(5324,3181)(1196,-3995)
\put(5775,-3150){\makebox(0,0)[rb]{\smash{{\SetFigFont{10}{12.0}{\familydefault}{\mddefault}{\updefault}cut-set upper bound}}}}
\put(1738,-3085){\makebox(0,0)[rb]{\smash{{\SetFigFont{10}{12.0}{\familydefault}{\mddefault}{\updefault} 0.2}}}}
\put(1738,-2816){\makebox(0,0)[rb]{\smash{{\SetFigFont{10}{12.0}{\familydefault}{\mddefault}{\updefault} 0.3}}}}
\put(1738,-2547){\makebox(0,0)[rb]{\smash{{\SetFigFont{10}{12.0}{\familydefault}{\mddefault}{\updefault} 0.4}}}}
\put(1738,-2278){\makebox(0,0)[rb]{\smash{{\SetFigFont{10}{12.0}{\familydefault}{\mddefault}{\updefault} 0.5}}}}
\put(1738,-2010){\makebox(0,0)[rb]{\smash{{\SetFigFont{10}{12.0}{\familydefault}{\mddefault}{\updefault} 0.6}}}}
\put(1738,-1741){\makebox(0,0)[rb]{\smash{{\SetFigFont{10}{12.0}{\familydefault}{\mddefault}{\updefault} 0.7}}}}
\put(1738,-1472){\makebox(0,0)[rb]{\smash{{\SetFigFont{10}{12.0}{\familydefault}{\mddefault}{\updefault} 0.8}}}}
\put(1738,-1203){\makebox(0,0)[rb]{\smash{{\SetFigFont{10}{12.0}{\familydefault}{\mddefault}{\updefault} 0.9}}}}
\put(1738,-934){\makebox(0,0)[rb]{\smash{{\SetFigFont{10}{12.0}{\familydefault}{\mddefault}{\updefault} 1}}}}
\put(1813,-3748){\makebox(0,0)[b]{\smash{{\SetFigFont{10}{12.0}{\familydefault}{\mddefault}{\updefault} 1}}}}
\put(4094,-3748){\makebox(0,0)[b]{\smash{{\SetFigFont{10}{12.0}{\familydefault}{\mddefault}{\updefault} 10}}}}
\put(6375,-3748){\makebox(0,0)[b]{\smash{{\SetFigFont{10}{12.0}{\familydefault}{\mddefault}{\updefault} 100}}}}
\put(1331,-2217){\rotatebox{90.0}{\makebox(0,0)[b]{\smash{{\SetFigFont{10}{12.0}{\familydefault}{\mddefault}{\updefault}$R$ [bits/channel use]}}}}}
\put(4094,-3935){\makebox(0,0)[b]{\smash{{\SetFigFont{10}{12.0}{\familydefault}{\mddefault}{\updefault}Number of relays, $K$}}}}
\put(5775,-3309){\makebox(0,0)[rb]{\smash{{\SetFigFont{10}{12.0}{\familydefault}{\mddefault}{\updefault}$R_\text{coded,f}$ (forwarding relays)}}}}
\put(1738,-3623){\makebox(0,0)[rb]{\smash{{\SetFigFont{10}{12.0}{\familydefault}{\mddefault}{\updefault} 0}}}}
\put(1738,-3354){\makebox(0,0)[rb]{\smash{{\SetFigFont{10}{12.0}{\familydefault}{\mddefault}{\updefault} 0.1}}}}
\put(5775,-3468){\makebox(0,0)[rb]{\smash{{\SetFigFont{10}{12.0}{\familydefault}{\mddefault}{\updefault}$R_\text{coded,d}$ (decoding relays)}}}}
\end{picture}%
} 
\label{fig:forwarding-vs-coding-2}}
\hspace{1cm}
\caption{Capacity upper bound and achievable rates (with $P_\textnormal{e} \rightarrow 0$) for coded transmission with forwarding relays and decoding relays}
\label{fig:forwarding-vs-coding-vs-cut-set}
\end{figure*}

\subsection{Coded Transmission with Decoding Relays}

We have seen that forwarding relays are asymptotically optimal for large $K$.
Now, we show that coded transmission with decoding relays achieves the capacity when $K$ is smaller than a certain positive integer which depends on the cross-over probabilities. From Theorem~\ref{theorem:finite-k-capacity}, we have the following result.
\begin{corollary} \label{corollary:df}
Consider the symmetrical BSPR network.  If $K \leq K'(p_s,p_d)$, where
\begin{equation} \label{eq:kdash}
K'(p_s,p_d) = \max \left\{1,\frac{1-H(p_s)}{1-H(p_d)} \right\},
\end{equation}
then coded transmission with decoding relays achieves the capacity.
\end{corollary}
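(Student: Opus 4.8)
The plan is to obtain this corollary directly from Theorem~\ref{theorem:finite-k-capacity} by specializing its hypotheses to the symmetrical network and repackaging them into the single inequality $K \le K'(p_s,p_d)$, then checking that the decoding-relay rate of Corollary~\ref{lemma:decoding-relays-symmetrical} meets the resulting expression for $C$. First I would dispatch the case $K=1$. Since $K'(p_s,p_d)\ge 1$ by construction, $K=1$ always lies in the claimed range, and part~1 of Theorem~\ref{theorem:finite-k-capacity} gives $C = 1-\max\{H(p_s),H(p_d)\}$. On the other hand, Corollary~\ref{lemma:decoding-relays-symmetrical} with $K=1$ yields $R_\textnormal{coded,d} = \min\{1-H(p_s),\,1-H(p_d)\} = 1-\max\{H(p_s),H(p_d)\}$, so coded transmission with decoding relays attains $C$.

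For $K>1$, the key step is to rewrite the per-relay hypothesis of part~2 of Theorem~\ref{theorem:finite-k-capacity}. Because $p_d<1/2$ and $H$ is strictly increasing on $[0,1/2]$, we have $1-H(p_d)>0$, so the condition $K-\sum_{j=1}^K H(p_{j,d}) \le 1-H(p_{s,i})$ for all $i$ becomes, in the symmetrical case, $K\,(1-H(p_d)) \le 1-H(p_s)$, i.e.\ $K \le \tfrac{1-H(p_s)}{1-H(p_d)}$. Together with $K>1$ this forces $\tfrac{1-H(p_s)}{1-H(p_d)}>1$, hence $K'(p_s,p_d)=\tfrac{1-H(p_s)}{1-H(p_d)}$ and the hypothesis is exactly $K \le K'(p_s,p_d)$. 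Under this condition part~2 of Theorem~\ref{theorem:finite-k-capacity} gives $C = K-\sum_{j=1}^K H(p_{j,d}) = K(1-H(p_d))$, and from Corollary~\ref{lemma:decoding-relays-symmetrical}, $R_\textnormal{coded,d} = \min\{1-H(p_s),\,K(1-H(p_d))\}$; since $K \le \tfrac{1-H(p_s)}{1-H(p_d)}$ implies $K(1-H(p_d)) \le 1-H(p_s)$, this minimum equals $K(1-H(p_d)) = C$, as required.

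Finally I would check exhaustiveness over the claimed range. If $K>1$ and $p_s \ge p_d$, then $1-H(p_s)\le 1-H(p_d)$, so $\tfrac{1-H(p_s)}{1-H(p_d)}\le 1$, whence $K'(p_s,p_d)=1<K$; such $K$ are outside the hypothesis $K\le K'(p_s,p_d)$ and need no treatment. Thus the $K=1$ case and the $K>1$ case together cover every $K\le K'(p_s,p_d)$, proving the corollary.

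There is essentially no analytic obstacle here — the argument is bookkeeping. The one place needing care is matching the case split of Theorem~\ref{theorem:finite-k-capacity} ($K=1$ versus $K>1$, with its relay-by-relay inequality) to the single clean bound $K\le K'(p_s,p_d)$; in particular one must verify that the $\max\{1,\cdot\}$ in the definition of $K'$ correctly absorbs the regime $p_s\ge p_d$, where only $K=1$ is admissible. I would also state explicitly the elementary fact $1-H(p_d)>0$ that licenses dividing by it.
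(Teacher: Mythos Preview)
Your proposal is correct and follows the same route as the paper, which simply derives the corollary from Theorem~\ref{theorem:finite-k-capacity} by specializing to the symmetrical case. Your additional verification via Corollary~\ref{lemma:decoding-relays-symmetrical} that $R_\textnormal{coded,d}$ equals the capacity value is harmless but redundant, since Theorem~\ref{theorem:finite-k-capacity} already asserts that decoding relays achieve the stated capacity.
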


The proof of $R_\textnormal{coded,d}$ again relies on the channel coding theorem. Hence, we also have an infinitely long delay for coded transmission with decoding relays. 


\begin{remark} \label{remark:suboptimal}
From Theorem~\ref{lemma:upper-bound-full-decoding}, we know that for any coding scheme in which one or more relays are to decode the source message, rates at most $R_\textnormal{decode}=1-H(p_s)$ can be achieved, which is independent of $K$ and is bounded away from the capacity upper bound of 1~bit/network use, for any $p_s > 0$. Since we have shown that as $K$ increases, rates arbitrarily close to 1~bit/channel are achievable, it follows that there exists a positive integer $K'$ where for any $K \geq K'$, we have that $R_\textnormal{decode} < R_\textnormal{coded,f} < C$, i.e., any coding scheme that requires decoding of the source message at any relay is suboptimal in a large BSPR network---in such cases, forwarding relays achieve strictly higher rates.
\end{remark}

\subsection{Summary}

From the above subsections, the three coding schemes  achieve the capacity of the BSPR network under different conditions as summarized in Table~\ref{tab:comparison-schemes}.

\subsection{Numerical Examples}

We present two numerical examples to compare achievable rates of coded transmission with forwarding relays and decoding relays to the capacity upper bound with varying $K$ and with the following parameter values: (i) $p_s= p_d=0.1$ and (ii) $p_s= 0.05$, $p_d=0.3$. In the first network, the source-to-relay and the relay-to-destination channels are equally noisy, while in the second network, the source-to-relay channels are less noisy.

The results are shown in Figs.~\ref{fig:forwarding-vs-coding} and \ref{fig:forwarding-vs-coding-2}. We see that decoding relays achieve the capacity when $K$ is small, i.e.,  $K=1$ when $p_s=p_d=0.1$, and $K \leq 6$ when $p_s=0.05$ and $p_d=0.3$. We can see from \eqref{eq:df-symmetry} that when $K \geq \frac{1-H(p_s)}{1-H(p_d)}$, the maximum achievable rate of decoding relays is fixed at $R_\textnormal{coded,d} = 1-H(p_s)$. Hence, using decoding relays is suboptimal when the number of relays is large (where forwarding relays achieve close to 1~bit/network use). Using forwarding relays, as predicted, approaches the capacity upper bound as the number of relays increases.

\begin{figure*}[!t]
\small
\begin{subequations}
\begin{align}
&p^*(v_1,v_2,\dotsc,v_K|u) =\begin{bmatrix}
p^*(00 \dotsm 0|0) & p^*(00 \dotsm 1|0) & \dotsm &p^*(11 \dotsm 1|0)\\
p^*(00 \dotsm 0|1) & p^*(00 \dotsm 1|1) & \dotsm &p^*(11 \dotsm 1|1)
\end{bmatrix}\label{eq:matrix}\\
&= \left[\;
\boxed{\begin{matrix}
p^*(00 \dotsm 0|0) & p^*(11 \dotsm 1|0)\\
p^*(00 \dotsm 0|1) & p^*(11 \dotsm 1|1)
\end{matrix}}\;
\boxed{\begin{matrix}
p^*(00 \dotsm 1|0) & p^*(11 \dotsm 0|0)\\
p^*(00 \dotsm 1|1) & p^*(11 \dotsm 0|1)
\end{matrix}}\;\dotsm\;
\boxed{\begin{matrix}
p^*(10001 \dotsm 01|0) & p^*(01110 \dotsm 10|0)\\
p^*(10001 \dotsm 01|1) & p^*(01110 \dotsm 10|1)
\end{matrix}}\;\dotsm\;
\right]\label{eq:matrix-grouped}
\end{align}
\end{subequations}
\hrulefill
\vspace*{4pt}
\end{figure*}

\section{Reflection}\label{sec:reflection}

We have investigated the binary-symmetric parallel-relay network. We derived achievable rates using different coding schemes that utilize forwarding or decoding relays or a mix thereof, as well as coded or uncoded transmission at the source. We have also analyzed the network as the number of relays grows to infinity.

With coded transmission, forwarding relays achieve the capacity (rounded to some number of significant figures) for networks with a finite number of relays. For instance, for cross-over probability $p=0.1$, we need 16 relays or more to achieve within 0.0001 bits of the capacity, and for $p=0.4$, we need 387 relays or more to achieve the same result. However, an infinitely long code length, $n$, is required to drive the error probability to zero. Decoding is done after the destination receives $n$ channel outputs over time which necessarily incurs a large delay.

With coded transmission, decoding relays achieve the capacity of networks with one relay, and networks with more relays if the sum of capacities of all the channels from the relays to the destination is smaller than the capacity of the channel from the source to each relay. 
Again, decoding is done after the destination receives $n$ channel outputs over time, which necessarily incurs a large delay.

With uncoded transmission, decoding is almost ``instantaneous'', i.e., when the destination receives the noisy bit transmitted by the source. The transmission from the source to the relays and the transmissions from the relays to the destination take two network uses, and this is the total delay incurred. With uncoded transmission, message bits are sent at 1~bits/network use which is an upper bound to the capacity. However, to drive the error probability to zero, an infinitely large number of relays is required.

Coded transmission with decoding relays (which removes the noise on the source-to-relays channels) performs well when the number of relays is small because the number of additional rate constraints---required since each relay must decode the source message---is also small.
These additional rate constraints actually limit its performance when the number of relays increases. Using forwarding relays, even though the noise on the source-to-relays channels propagates to the relays-to-destination channels, the increase in the number of relays provides the destination with sufficient spatial diversity to decode the source message. We can view both coded and uncoded transmission schemes with forwarding relays as ``spatial'' repetition codes, where the minimum Hamming distance of the code increases with the codelength, and the codelength increases with the number of relays.

The aforementioned observations lead to the design of the hybrid scheme, where we use some relays as forwarding relays and the others as decoding relays. More specifically, we use relays with better source-to-relay channels as decoding relays, and relays with noisy source-to-relay channels as forwarding relays. The reason for this choice is that if a relay is able to decode the source message without constraining the overall transmission rate (which is likely when the channel from the source to {\em this} relay is good), we should let the relay decode the source message on that link to stop noise propagation.
This scheme can improve the performance over purely forwarding relays and purely decoding relays in non-symmetrical BSPR networks where the source-to-relay channels are not all equally noisy.

\appendices


\section{Proof of Theorem~\ref{theorem:cut-set-upper-bound}} \label{appendix:ub}
First, we note that $W \rightarrow \boldsymbol{U} \rightarrow \bar{\boldsymbol{V}} \rightarrow \bar{\boldsymbol{X}} \rightarrow \bar{\boldsymbol{Y}} \rightarrow \hat{W}$ forms a Markov chain. Using Fano's inequality~\cite[Lem.\ 7.9.1]{coverthomas06}, we have that $H(W|\hat{W}) \leq 1 + P_\textnormal{e} nR$.
So,
\begin{subequations}
\begin{align}
nR &= H(W) = H(W|\hat{W}) + I(W;\hat{W}) \\
&\leq 1 + P_\textnormal{e} nR + I(\boldsymbol{U};\bar{\boldsymbol{Y}}) \label{eq:fano-2} \\
R & \leq I(U;\bar{V}) + P_\textnormal{e} R + 1/n \label{eq:upper-bound-1}
\end{align}
\end{subequations}
where \eqref{eq:fano-2} follows from Fano's inequality and by applying the data processing inequality (DPI)~\cite[Thm.\ 2.8.1]{coverthomas06} to the aforementioned Markov chain, and \eqref{eq:upper-bound-1} follows by applying DPI again and because the channel from $U$ to $\bar{V}$ is memoryless.

We also have that
\begin{subequations}
\begin{align}
&I(\boldsymbol{U};\bar{\boldsymbol{Y}}) \leq I(\bar{\boldsymbol{X}};\bar{\boldsymbol{Y}}) \leq n I(\bar{X};\bar{Y}) \label{eq:memoryless-a}\\
&= n \sum_{i=1}^K I(X_i;Y_i) \leq n \sum_{i=1}^K [1 - H(p_{i,d})] \label{eq:uniform-max}
\end{align}
\end{subequations}
where \eqref{eq:memoryless-a} follows from the DPI and  because the channels from $\bar{X}$ to $\bar{Y}$ are memoryless, and \eqref{eq:uniform-max} is derived because the uniform distribution $p'(x_i)$ maximizes the mutual information.
From \eqref{eq:fano-2} and \eqref{eq:uniform-max}, we have
\begin{equation}
R \leq K - \sum_{i=1}^K H(p_{i,d}) + P_\textnormal{e} R + 1/n. \label{eq:upper-bound-2}
\end{equation}

Setting $n \rightarrow \infty$ and then $P_\textnormal{e} \rightarrow 0$ for conditions \eqref{eq:upper-bound-1} and \eqref{eq:upper-bound-2}, we have Theorem~\ref{theorem:cut-set-upper-bound}. \hfill $\blacksquare$

\section{Proof of Lemma~\ref{lemma:symmetric}} \label{appendix:symmetric}

We write the matrix of transition probabilities of the channel as \eqref{eq:matrix}. The top row is the conditional probability of $\bar{V}$ given $U=0$ and the bottom row $U=1$. We rearrange the columns and pair up columns $p^*(\bar{v}|u)$ and $p^*(\bar{\bar{v}}|u)$ as a \emph{sub-matrix}, where $\bar{\bar{v}}$ is $\bar{v}$ with all the bits flipped. Each sub-matrix is boxed in \eqref{eq:matrix-grouped}. Clearly, $p^*(\bar{v}|u')=p^*(\bar{\bar{v}}|u'')$ if $u' \neq u''$. In each sub-matrix, the top row is a permutation of the bottom row, and the left column is a permutation of the right column. Hence, the channel is symmetric in the sense of \cite[page 94]{gallager68}. \hfill $\blacksquare$

\section{Proof of Theorem~\ref{thm:uncoded}} \label{appendix:uncoded-symmetrical}

As forwarding relays are used, we have the equivalent channel in Fig.~\ref{fig:equivalent-channel-simple-relays} with $U=W$.
At the destination, the received signals are $Y_i = W \oplus N_i$, where $N_i = Z_i \oplus E_i$ and $\Pr\{N_i=1\}=p$ as defined in \eqref{eq:equivalent-cross-over}.

Let $\bar{y} = (y_1,y_2,\dotsc,y_K)$ be the received signals at the destination. The optimal decision decoding rule, which minimizes the error probability, is:
\begin{equation}
\hat{w} = \begin{cases}0, & \textnormal{if } \Pr\{W=0|\bar{Y}=\bar{y}\} \geq \Pr\{W=1|\bar{Y}=\bar{y}\}\\
1, & \textnormal{otherwise}
\end{cases}.\label{eq:estimation-1}
\end{equation}
Applying Bayes' rule to \eqref{eq:estimation-1} and noting that $\Pr\{W=0\} = \Pr\{W=1\}=\frac{1}{2}$, the condition for $\hat{w}=0$ in \eqref{eq:estimation-1} becomes
\begin{equation}
\Pr\{\bar{Y}=\bar{y}|W=0\} \geq \Pr\{\bar{Y}=\bar{y}|W=1\} \label{eq:map}.
\end{equation}
This is known as the maximum likelihood decoder. Since $p_i=p$, $\forall i \in \{1,2,\dotsc,K\}$. The decision rule in \eqref{eq:map} becomes
\begin{equation}
(1-p)^{\mathbf{0}(\bar{y})}p^{K-\mathbf{0}(\bar{y})} \geq (1-p)^{K-\mathbf{0}(\bar{y})} p^{\mathbf{0}(\bar{y})},\label{eq:estimation-2}
\end{equation}
where $\mathbf{0}(\bar{y})$ is the number of 0's in $\bar{y}$. As $0 \leq p \leq \frac{1}{2}$, we have the optimal decoding function at the decoder as follows.
\begin{equation}
\hat{W} = g(\bar{Y}) = \begin{cases}
0, &\textnormal{if } \mathbf{0}(\bar{Y}) \geq \frac{K}{2}\\
1, &\textnormal{otherwise}
\end{cases}.\label{eq:uncoded-decoding-rule}
\end{equation}

Having defined the encoding and decoding functions for uncoded transmissions with forwarding relays, we derive the error probability.
\begin{align}
P_\textnormal{e} 
&= \Pr\{\hat{W}=1 | W=0\} \Pr\{W=0\}\nonumber\\
&\quad + \Pr\{\hat{W}=0 | W=1\} \Pr\{W=1\} \nonumber \\ 
&= \frac{1}{2}\Pr\left\{\mathbf{0}(\bar{Y}) < \frac{K}{2}\Big|W=0\right\}\nonumber\\
&\quad + \frac{1}{2}\Pr\left\{\mathbf{0}(\bar{Y}) \geq \frac{K}{2}\Big|W=1\right\} \nonumber \\ 
&= \Pr\left\{\mathbf{0}(\bar{Y}) \geq \frac{K}{2}\Big|W=1\right\}= \Pr\left\{\sum_{i=1}^K N_i \geq \frac{K}{2} \right\} \nonumber \\
&= \Pr\left\{\frac{1}{K}\sum_{i=1}^K N_i - p \geq \frac{1}{2} - p \right\} \nonumber \\
&\leq \exp \left[ -2K\left(\frac{1}{2}-p\right)^2\right] \triangleq P_\textnormal{e}^\textnormal{up},\label{eq:uncoded-error}
\end{align}
where \eqref{eq:uncoded-error} is due to Hoeffding~\cite[Thm.\ 2]{hoeffding63} if $\frac{1}{2} - p > 0$.

By sending uncoded bits at the rate $\frac{n}{n+1}$~bit/network use, we know from \eqref{eq:uncoded-error} that the error probability can be bounded by
\begin{equation}
P_\textnormal{e} \leq \exp \left( -K \delta \right),
\end{equation}
where $\delta = 2\left(\frac{1}{2}-p\right)^2 > 0$. So, for any $0 \leq p < \frac{1}{2}$ and  $\epsilon > 0$, we can select $K \geq \frac{1}{\delta}\ln \frac{1}{\epsilon}$ such that $P_\textnormal{e} \leq \epsilon$. \hfill $\blacksquare$

\section{Proof of Corollary~\ref{theorem:uncoded-general}} \label{appendix:uncoded-general}
We use the idea in the proof of Corollary~\ref{theorem:general-coded-forwarding-rate}, i.e., using only $M$ relays for some $M \leq K$. If a coding scheme achieves $P_\textnormal{e} \leq \epsilon$ using only $M$ relays, then it can also achieve $P_\textnormal{e} \leq \epsilon$ with $K$ relays. Again, for each sub-channel from $U$ to $Y_{m_i}$ with cross-over probability $p_{m_i}$, we further add random noise to get $Y'_{m_i} = Y_{m_i} \oplus E_{m_i}''$, where $E_{m_i}'' \in \{0,1\}$ and $\Pr\{E_{m_i}''=1\} = \frac{p_{\textnormal{MAX}(\mathcal{M})}-p_{m_i}}{1-2p_{m_i}}$. Now, each sub-channel from $U$ to $Y'_{m_i}$ is a binary symmetric channel with cross-over probability $p_{\textnormal{MAX}(\mathcal{M})}$. We use the following decoding rule:
\begin{equation}
\hat{W}  = \begin{cases}
0, &\textnormal{if } \mathbf{0}(Y_{m_1}',Y_{m_2}',\dotsc,Y_{m_M}') \geq \frac{M}{2}\\
1, &\textnormal{otherwise}
\end{cases}.
\end{equation}
From \eqref{eq:uncoded-error}, we know that the error probability of this decoder is
\begin{equation}
P_\textnormal{e} \leq \exp \left[-2 M\left(\frac{1}{2} - p_{\textnormal{MAX}(\mathcal{M})} \right)^2 \right].
\end{equation}
As $K \rightarrow \infty$, if $M\left(\frac{1}{2} - p_{\textnormal{MAX}(\mathcal{M})} \right)^2 \rightarrow \infty$, then for any $\epsilon > 0$, we can find some $\mathcal{M}$ such that $P_\textnormal{e} \leq \epsilon$. \hfill $\blacksquare$

\section{Proof of Theorem~\ref{theorem:coded-relay}} \label{appendix:coded}
We use the following {\em super-block} coding scheme. Consider $B$ blocks each consisting of $n$ network uses. We split the source message into $(B-1)$ equal parts, i.e., $W = (W_1,W_2,\dotsc, W_{B-1})$ where each $W_i$ is independent and uniformly distributed in $\{1,\dotsc, 2^{nR}\}$. In each block $b \in \{1,2,\dotsc, B-1\}$, the source transmit $\boldsymbol{U}^{(b)}(W_b)$. Denoting the received symbols of relay $i$ in block $b$ by $\boldsymbol{V}_i^{(b)}$, relay $i$ transmits $\boldsymbol{X}_i^{(b+1)}(\boldsymbol{V}_i^{(b)})$ in block $(b+1)$. The destination then decode $W_b$ from its received symbols in block $(b+1)$, i.e., $\bar{\boldsymbol{Y}}^{(b+1)}$.

In the following, we will only consider the transmissions from the source to the relay in block 1, and those from the relay to the destination in block 2. Suppose that the destination can reliably (with arbitrary low error probability) decode $W_1$. Repeating the same transmission scheme for the source--relay channel in block $b$ and for the relay--destination channel in block $(b+1)$ for all $b \in \{2,3,\dotsc, B-1\}$, the rate of $R'=(B-1)R/B$ is achievable. This means we can achieve rate $R' \rightarrow R$ by letting $B \rightarrow \infty$. Where appropriate, we drop the superscript that indicates the block to simplify notation.

We first select a set of $M$ relays, and let the set of selected relays be $\{m_1,m_2,\dotsc,m_M\} \triangleq \mathcal{M}$.
In block 1, the source sends $\boldsymbol{U}(W_1)$. Choosing $p'(u)$ to be the uniform distribution, if
\begin{equation}
R < I(U;V_{m_i}) = 1 - H(p_{s,m_i}), \label{eq:decode-1}
\end{equation}
for all $m_i \in \mathcal{M}$, then each relay in $\mathcal{M}$ can reliably (with arbitrarily low error probability when $n$ is sufficiently large) decode $W_1$.

In block 2, the relays in $\mathcal{M}$ transmit the decoded $W_1$ to the destination. The rest of the relays transmit zero, $X_j=0$ for all $j \notin \mathcal{M}$. Choosing $p(x_{m_1}, x_{m_2}, \dotsc, x_{m_M}) = \prod_{i=1}^M p'(x_{m_i})$, where each $p'(x_{m_i})$ is the uniform distribution, if
\begin{subequations}
\begin{align}
R &< I(X_{m_1}, \dotsc, X_{m_M} ; Y_{m_1}, \dotsc, Y_{m_M})\\
&= \sum_{i=1}^M I(X_{m_i};Y_{m_i}) = \sum_{i=1}^M [ 1 - H(p_{i,d})], \label{eq:decode-2}
\end{align}
\end{subequations}
then the destination can reliably decode $W_1$. Note that $X_{m_i}$ are independently generated. So, \eqref{eq:decode-2} follows because $Y_{m_i}$ is independent of $\{Y_{m_j}: j \neq i\}$, and $Y_{m_i}$ is also independent of $\{X_{m_j}: j \neq i\}$ given $X_{m_i}$.

If a rate $R$ satisfies \eqref{eq:decode-1} for all $m_i \in \mathcal{M}$ and \eqref{eq:decode-2}, then the destination can reliably decode $W_1$. Repeating this scheme for all blocks, we have Theorem~\ref{theorem:coded-relay}. \hfill $\blacksquare$

\section{Proof of Theorem~\ref{theorem:finite-k-capacity}} \label{appendix:finite-k-capacity}
Compare $R_\textnormal{ub}$ in Theorem~\ref{theorem:cut-set-upper-bound} and $R_\textnormal{coded,d}$ in Theorem~\ref{theorem:coded-relay}. For $K=1$, we set $M=1$, and we have $\max_{p(u)}I(U;V_1) = 1-H(p_{s,1})$. So, $R_\textnormal{coded,d}= R_\textnormal{ub}$, and is the capacity.

For $K>1$, setting $M=K$, if $K - \sum_{j=1}^K H(p_{j,d}) \leq 1 - H(p_{s,i})$ for all $i \in \{1,2,\dotsc, K\}$, then $R_\textnormal{coded,d}$ reduces to \eqref{decoding-relay-capacity}.
Since $I(U;V_1,V_2\dotsc,V_K) \geq I(U;V_i)$
for all $i$, we also have $\max_{p(u)} I(U;\bar{V}) \geq K - \sum_{j=1}^K H(p_{j,d})$,
and $R_\textnormal{ub}$ also reduces to \eqref{decoding-relay-capacity}. \hfill $\blacksquare$

\section{Proof of Theorem~\ref{theorem:coded-forward-asymptotic}} \label{appendix:coded-forward-asymptotic}

We first assume that $U=0$ is sent. We define the $\eta$-strongly typical set, denoted by $\mathcal{T}_{[Y|0]\eta}^K$, with respect to the distribution $p^*(y|u=0)$ as a set of vectors $\{ \bar{y} \}$ such that
\begin{align}
\left\vert \frac{1}{K}N(1;\bar{y}) - p \right\vert + \left\vert \frac{1}{K}N(0;\bar{y}) - q \right\vert < \eta,
\end{align}
where $N(a;\bar{y})$ is the number of occurrences of the symbol $a \in \{0,1\}$ in the sequence $\bar{y}$. See \cite[page 73]{yeung02} for a more general definition of strongly typical sets.

We assume that $0 < p < \frac{1}{2}$ is a rational number and consider a sufficiently large integer $K'$ such that $K'p$ is an integer. For any $K'$ and $p$, we can choose $\eta$ sufficiently small such that $\mathcal{T}_{[Y|0]\eta}^{K'} =$ \{all  $\bar{y}$ each having $K'p$ number of 1's and $K'q$ number of 0's\}. This means, for each $\bar{y} \in \mathcal{T}_{[Y|0]\eta}^{K'}$, we have $p^*(\bar{y}|0) = p^{K'p}q^{K'q}$. By the same argument, $\mathcal{T}_{[Y|1]\eta}^{K'} =$ \{all $\bar{y}$ each having  $K'p$ number of 0's and $K'q$ number of 1's \}, and hence $p^*(\bar{y}|1) = p^{K'p}q^{K'q}$.

Recall that the input distribution $p'(u)$ is the uniform distribution. Suppose that some $\bar{y}$ has $K'p$ number of 0's and $K'q$ number of 1's. We have $p^*(\bar{y}|0) = q^{K'p}p^{K'q}$ and $p^*(\bar{y}|1) = q^{K'q}p^{K'p}$. So,
\begin{align}
p^*(\bar{y}) &= \Pr\{U=0\} p^*(\bar{y}_{(l)}|0) + \Pr\{U=1\} p^*(\bar{y}_{(l)}|1)\nonumber \\
&= \frac{1}{2} (q^{K'p}p^{K'q} + q^{K'q}p^{K'p}) \triangleq \alpha(K',p). \label{eq:y-bar}
\end{align}

Since the effective noise, $N_i$, in each channel $U \rightarrow Y_i$ is independent and $\Pr\{N_i=1\}$ is the same for all $i \in \{1,2,\dotsc,K\}$, $Y_i$ are i.i.d. given $U$ [from \eqref{eq:equivalent-channel-2}]. So, we have, for $K'$ sufficiently large, \cite[Theorem 5.2]{yeung02}
\begin{equation}
\Pr \{ \bar{Y} \in  \mathcal{T}_{[Y|U]\eta}^{K'} | U\} > 1 - \eta.
\end{equation}

Now, for some sufficiently large $K'$,
\begin{align*}
&H(\bar{Y}) = - \sum_{\bar{y}} p^*(\bar{y}) \log p^*(\bar{y}) \nonumber\\
&= -\sum_\textnormal{ all $\bar{y}$ that have $K'p$ number of 0's} p^*(\bar{y}) \log p^*(\bar{y}) \nonumber \\ &\quad - \sum_\textnormal{ all $\bar{y}$ that have $K'q$ number of 0's} p^*(\bar{y}) \log p^*(\bar{y})\nonumber\\
&\quad - \sum_{\textnormal{ all $\bar{y}$ that do not have $K'p$ or $K'q$ number of 0's}} p^*(\bar{y}) \log p^*(\bar{y})   \nonumber\\
&> -\sum_{\bar{y} \in \mathcal{T}_{[Y|1]\eta}^{K'} } p^*(\bar{y}) \log p^*(\bar{y}) - \sum_{\bar{y} \in \mathcal{T}_{[Y|0]\eta}^{K'}} p^*(\bar{y}) \log p^*(\bar{y}) \nonumber\\
&= -\sum_{\bar{y} \in \mathcal{T}_{[Y|1]\eta}^{K'} } \frac{1}{2}[p^*(\bar{y}|1) + p^*(\bar{y}|0)] \log p^*(\bar{y}) \nonumber\\
&\quad- \sum_{\bar{y} \in \mathcal{T}_{[Y|0]\eta}^{K'}} \frac{1}{2}[p^*(\bar{y}|1) + p^*(\bar{y}|0)] \log p^*(\bar{y}) \nonumber\\
&> -\sum_{\bar{y} \in \mathcal{T}_{[Y|1]\eta}^{K'} } \frac{1}{2}p^*(\bar{y}|1) \log p^*(\bar{y}) \nonumber \\
&\quad - \sum_{\bar{y} \in \mathcal{T}_{[Y|0]\eta}^{K'}} \frac{1}{2} p^*(\bar{y}|0) \log p^*(\bar{y}) \nonumber\\
& \stackrel{(\sharp)}=  - \frac{1}{2} \log\alpha(K',p) \sum_{\bar{y} \in \mathcal{T}_{[Y|1]\eta}^{K'} } p^*(\bar{y}|1) \nonumber\\
&\quad - \frac{1}{2} \log\alpha(K',p)\sum_{\bar{y} \in \mathcal{T}_{[Y|0]\eta}^{K'} } p^*(\bar{y}|0)\\
&=  - \frac{1}{2} \log\alpha(K',p) \Pr\{\bar{Y} \in \mathcal{T}_{[Y|1]\eta}^{K'}|U=1\}\nonumber\\
&\quad - \frac{1}{2} \log\alpha(K',p) \Pr\{\bar{Y} \in \mathcal{T}_{[Y|0]\eta}^{K'}|U=0\} \nonumber\\
&> - \log\alpha(K',p) (1 - \eta) \nonumber\\
& = -\log\alpha(K',p) - \eta',
\end{align*}
where $\eta' = -\eta\log\alpha(K',p)>0$ which can be chosen arbitrarily small by choosing an arbitrarily small $\eta$ for any $K'$ and $p$. Note that the equality $(\sharp)$ follows from \eqref{eq:y-bar}.

Now,
\begin{subequations}
\begin{align}
&R_\textnormal{coded,f} = I(U;\bar{Y}) = H(\bar{Y}) - H(\bar{Y}|U)\\
&> -\log\alpha(K',p) - \eta'- K'H(p)\\
&= -\log\alpha(K',p) + K' (p \log p + q \log q) - \eta'\\
&= -\log \left( \frac{2}{1 + \left(\frac{p}{1-p}\right)^{K'(1-2p)}} \right) - \eta' \label{eq:coded-asymptotic}\\
& \rightarrow 1 = R_\textnormal{ub}, \quad \textnormal{as} \quad K' \rightarrow \infty, \label{eq:proof-asymptotic}
\end{align}
\end{subequations}
where $\eta'$ can be chosen arbitrarily small for any $K'$ and $p$, and 1 bit/network use is an upper bound to the capacity.
Note that $\frac{p}{1-p} < 1$, and $1 - 2p>0$. The above result holds for large $K'$ where $K'p$ is an integer, But since $I(U;Y_1\dotsc, Y_{K+1}) \geq I(U;Y_1,\dotsc, Y_K)$ for all positive integers $K$, we have Theorem~\ref{theorem:coded-forward-asymptotic}. \hfill $\blacksquare$


\end{document}